%% file: main.tex
\pdfoutput=1
\documentclass[aps,prl,reprint,superscriptaddress,nofootinbib,floatfix,longbibliography]{revtex4-2}
\usepackage[T1]{fontenc}
\usepackage{lmodern}
\usepackage{amsmath,amssymb,mathtools,bm,mathrsfs,graphicx,microtype,xcolor}
\usepackage{booktabs,tabularx,array}
\usepackage[colorlinks=true,allcolors=blue!55!black]{hyperref}
\hypersetup{pdftitle={Preparation Changes the Cost of Calibration for Quantum Control},pdfauthor={Xiu-Hao Deng}}

\input{macros}

\newcolumntype{Y}[1]{>{\raggedright\arraybackslash}p{#1}}
\begin{document}
\title{Preparation Changes the Cost of Calibration for Quantum Control}
\author{Xiu-Hao Deng}
\email{dengxiuhao@iqasz.cn}
\thanks{ORCID: \href{https://orcid.org/0000-0003-0105-9074}{0000-0003-0105-9074}}
\affiliation{International Quantum Academy, Shenzhen 518048, China}
\affiliation{Shenzhen Branch, Hefei National Laboratory, Shenzhen 518048, China}
\date{September 19, 2026}
\begin{abstract}
Error-correction records can reveal noise yet supply the information needed
for control too slowly. In a surface code under Gaussian dephasing, encoded
calibration learns a correlation that reverses the preferred control mainly
through rare events. A product probe exposes it more often through the same
checks, reducing ideal-record sample complexity from inverse-square to
inverse-linear in phase variance. Counting both calibration and execution
costs, we show that this gain can repay probe overhead and enable protected
tasks within budgets that exclude encoded calibration.
\end{abstract}
\maketitle

Quantum error correction generates syndrome and detection records while
protecting quantum information. Reusing them avoids a separate probe
preparation and supports learning during error
correction \cite{combes2014insitu,kelly2016scalable}. Detection events also
support ongoing reinforcement-learning control of error correction
\cite{sivak2026rlqec}. These records can reveal physical
and logical noise
\cite{wagner2021optimal,wagner2022pauli,wagner2023logical,takou2025detector,zheng2026efficient}.
Dedicated probes cost reset, preparation, and acquisition time, but can be
designed to expose control-relevant correlations faster. Both costs matter
when calibration and the protected operation share a finite budget.

Before an unknown logical input, possibly reference-entangled, arrives,
calibration commits a control sequence [Fig.~\ref{fig:mechanism}(a)]. The
protected logical idle must then last $L$ error-correction cycles within
total budget $B$ and a service deadline. At every prefix, tolerance $D$ bounds
$\frac12\lVert\overline{\mathcal C}-\mathcal I\rVert_\diamond$ for the
calibration-averaged channel, worst case over input and reference
\cite{watrous2018theory}.
Service syndromes drive the fixed recovery without revising the sequence.
Both routes are charged for calibration acquisition. The aim is a useful
control decision, without full noise reconstruction.

The comparison fixes the encoding, cycle duration, extractor, recovery, and
finite classical action menu. Encoded queries may use any logical/reference
input and return every check plus the recovered output; adaptive processing
and external quantum memory are allowed. Product calibration instead changes
the data input preparation and uses the same extractor. Alternative physical
check circuits and query durations are outside this comparison.

Correctability alone does not cause slow learning: frequent syndrome rates
generally reveal first-order covariance projections. Our construction places
the control-relevant correlation outside those projections. The two models
then favor opposite controls but encoded inputs distinguish them only through
rare two-error sectors [Fig.~\ref{fig:mechanism}(b)]. A product input, carrying
no unknown logical state, exposes the same correlation more frequently
[Fig.~\ref{fig:mechanism}(c)]. Both use the same checks and retain every
outcome; the encoded comparator also receives the recovered quantum output
and may use free external quantum memory. Changing preparation can save
enough calibration cycles to repay its overhead and alter task feasibility.

\begin{figure*}[t]
\includegraphics[width=\textwidth]{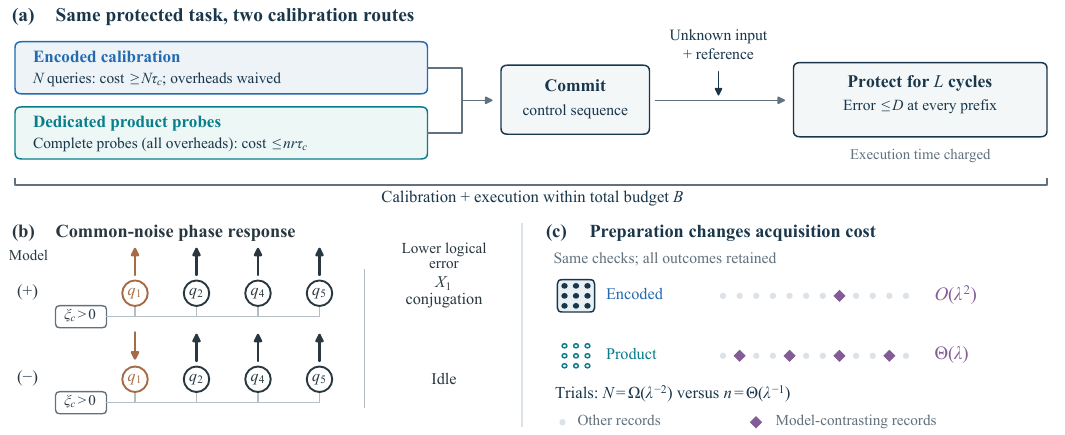}
\caption{Preparation changes the cost of learning a useful control.
(a) The two branches are alternative calibration routes, completed before the unknown logical/reference input, that commit a control for the same protected idle. $N$ and $n$ count encoded queries and product probes; $\tau_c$ is one protected-cycle duration, and $r\tau_c$ upper-bounds a complete product probe. The budget charges calibration and execution; encoded overheads are waived in the lower bound.
(b) Arrows show phase responses to a positive common fluctuation $\xi_c$; independent fluctuations are omitted. Only qubit 1 (copper) reverses between models. All arrows reverse when $\xi_c$ changes sign. The right column gives the lower-error control (ideal pulses).
(c) Boxed filled and unboxed open nine-dot icons denote encoded and product preparations of the same data qubits. Each strip symbol represents one trial's check record. Purple diamonds highlight rare, model-sensitive outcomes; gray circles denote other records. Both are retained. Individual highlighted outcome categories have finite relative model contrast; with ideal delivered records, their probabilities scale as $O(\lambda^2)$ for encoded and $\Theta(\lambda)$ for product preparation. Product discrimination uses the joint check values within these events, not their total count alone. $N$ and $n$ count calibration trials, not equal elapsed time; the displayed orders concern fixed-error model discrimination. Both routes return all eight checks; encoded queries also return the recovered output. Strips are schematic, not sample data or equal-cost records.}
\label{fig:mechanism}
\end{figure*}

Designed syndrome-extraction experiments and multiqubit spectroscopy already
learn circuit noise and cross correlations with product-state probes
\cite{hockings2025scalable,pazsilva2017multiqubit,szankowski2016spectroscopy}.
Our contribution is the acquisition cost of a control decision and the
full-output converse, rather than the correlation probe itself. This
complements efficient Pauli-noise characterization
\cite{flammia2020efficient,harper2020efficient} and learning advantages from
quantum processing or memory \cite{huang2022advantage,chen2021memory}.
Coherent paths can interfere within a syndrome sector
\cite{huang2019performance}; Pauli conjugation can change that interference
and the logical error \cite{cai2020pauli}. This supplies the decision-relevant
correlation in the construction below.

\emph{A correlation that changes the control.---}
Consider a rotated nine-data-qubit surface code with eight reusable check
ancillas \cite{tomita2014surface}. Qubits are indexed $0,\ldots,8$. During one
idle interval, centered Gaussian phases have covariance
\begin{equation}
 K_\theta=\lambda(I+v_\theta v_\theta^{\mathsf T}),\qquad
 v_\theta=(0,\theta,1,0,1,1,0,0,0)^{\mathsf T},
 \label{eq:noise}
\end{equation}
where $\theta\in\{+1,-1\}$ is stable through calibration and service, and
$\lambda$ sets the phase variance. Each phase combines independent and signed
common Gaussian fluctuations. Local-plus-common models are standard descriptions of
correlated dephasing \cite{yoshihara2010correlated,vonlupke2020spectroscopy};
Eq.~\eqref{eq:noise} defines a controlled theoretical instance.

The actions are idle ($a=0$) and $X_1$ conjugation of the idle ($a=1$), with
the known frame restored before comparison and the same fixed recovery. Let
$K_\theta=\lambda\Sigma_{\theta,0}$ denote the idle covariance. Ideal
conjugation gives $\Sigma_{\theta,1}=D_1\Sigma_{\theta,0}D_1$, where $D_1$
reverses coordinate 1. Let
$q_{\theta,a}$ be the recovered logical phase-flip probability for model
$\theta$ and action $a$. The ideal-pulse physical response is
\begin{equation}
\begin{split}
 q_{\theta,0}-q_{\theta,1}
 &=\frac14\,\mathbb E_\theta[\phi_1\phi_2\phi_4(\phi_5+\phi_8)]
   +O(\lambda^3)\\
 &=\frac{3\theta}{4}\lambda^2+O(\lambda^3).
\end{split}
\label{eq:physical-response}
\end{equation}
The two weight-four $Z$ stabilizers containing site 1 have supports
$\{1,2,4,5\}$ and $\{1,2,4,8\}$. For error supports $E,F$, interference
within one syndrome and recovered logical class requires their symmetric
difference to be a $Z$ stabilizer; conjugation changes its sign only if it
contains site 1. No weight-two $Z$ stabilizer contains this site,
so the first allowed response is fourth order in phase, or second order in
$\lambda$. The Supplemental Material (SM) derives this selection rule for
one-logical-qubit Calderbank--Shor--Steane (CSS) codes under the stated parity
and recovery assumptions \cite{prlv4supp}. The allowed terms can vanish or cancel; here
they yield Eq.~\eqref{eq:physical-response}. The leading logical dephasing
rates are
$q_{+,0}=45\lambda^2/16$, $q_{-,0}=33\lambda^2/16$,
$q_{+,1}=q_{-,0}$, and $q_{-,1}=q_{+,0}$, up to $O(\lambda^3)$.
Thus conjugation helps model $+$ and hurts model $-$. Gaussian fourth moments
depend on the full covariance through Wick's rule, while the leading syndrome
rates access only the projections in Eq.~\eqref{eq:leading}. The correlation
affecting the control is therefore absent from the leading syndrome rates,
although it is present in higher-order syndrome statistics.

Comparing encoded idle and conjugated cycles---an A/B experiment---can
measure the leading control response. With
$\Delta P_\theta(s)=P_{\theta,0}(s)-P_{\theta,1}(s)$ denoting the idle-minus-
conjugated probability difference, this patch obeys
\begin{equation}
 q_{\theta,0}-q_{\theta,1}
 =\tfrac12[\Delta P_\theta(10)+\Delta P_\theta(11)]+O(\lambda^3).
 \label{eq:syndrome-response}
\end{equation}
Here 10 and 11 denote outcomes $(+,-,+,-)$ and $(-,-,+,-)$ for the ordered
checks $X_0X_1X_3X_4$, $X_4X_5X_7X_8$, $X_6X_7$, and $X_1X_2$; a minus sign
is bit 1, least-significant bit first. These two syndromes have probability
$\Theta(\lambda^2)$ and a finite relative contrast, so a fixed-error decision
from their counts needs $\Theta(\lambda^{-2})$ independent cycles. The
obstacle is slow acquisition, not absence of an observable response. This
relation uses ideal pulses and the fixed recovery, as does
Eq.~\eqref{eq:physical-response}.

\emph{When syndrome learning is slow.---}
Let a fixed stabilizer code with encoding $V$ on $m$ physical qubits exactly
correct the coherent span of
$I,Z_0,\ldots,Z_{m-1}$ \cite{knill1997theory}. Each cycle applies
\begin{equation}
 U(\phi)=\exp\!\left[-\frac{i}{2}\sum_{j=0}^{m-1}\phi_jZ_j\right],
 \qquad \phi\sim\mathcal N(0,\lambda\Sigma_{\theta,a}),
\end{equation}
followed by ideal complete syndrome extraction and a fixed correcting
recovery. The covariance shapes are known, positive definite, and selected
from a fixed compact family; the classically selected action menu is finite,
and cycles are independent conditional on the stable label $\theta$.

For a nonzero single-$Z$ syndrome $s$, let $J_s$ collect its qubits.
Correctability gives
$Z_jV=\nu_{s,j}Z_{j_s}V$ for $j\in J_s$, where $\nu_{s,j}=\pm1$.
With $\nu_s$ zero outside $J_s$, its leading-probability coefficient is
\begin{equation}
\omega_{\theta,a,s}=\tfrac14\nu_s^{\mathsf T}
                   \Sigma_{\theta,a}\nu_s .
\label{eq:leading}
\end{equation}
Correctability determines these projections but does not make them equal
between models. Matching is an additional restriction: it can leave a
covariance direction invisible to frequent syndromes even when fourth-moment
path interference makes that direction relevant to the preferred control.

We use Kullback--Leibler (KL) divergence with natural logarithms.

\begin{theorem}[Syndrome-learning cost]
\label{thm:matching}
If $\omega_{+,a,s}=\omega_{-,a,s}$ for every admitted action and every
nonzero single-error syndrome, any syndrome-only adaptive experiment with at
most $N$ cycles has record $Y_N$ satisfying, in both directions,
\begin{equation}
 D_{\rm KL}(P_+^{Y_N}\Vert P_-^{Y_N})\le K_{\rm syn}N\lambda^2 .
 \label{eq:syndrome-bound}
\end{equation}
Here $K_{\rm syn}$ is independent of $N$ and sufficiently small $\lambda$.
Logical states may persist and undergo syndrome-dependent trace-preserving
logical operations. Identifying the label with error at most
$0<\epsilon<1/2$ in each model therefore needs at least
${\rm kl}(1-\epsilon\Vert\epsilon)/(K_{\rm syn}\lambda^2)$ cycles,
where ${\rm kl}$ denotes binary relative entropy.
\end{theorem}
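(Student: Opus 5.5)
The plan is to reduce the adaptive, memory-carrying experiment to a sum of per-cycle divergences via the chain rule, and then bound each per-cycle term uniformly by $O(\lambda^2)$. I would first write the record as $Y_N=(A_1,S_1,\dots,A_N,S_N)$, with actions $A_t$ chosen adaptively, possibly with private randomness, as functions of past syndromes. The chain rule for KL divergence gives
\begin{equation*}
D_{\rm KL}(P_+^{Y_N}\Vert P_-^{Y_N})=\sum_{t=1}^N \mathbb E_+\!\left[D_{\rm KL}\bigl(P_+^{S_t\mid \mathcal F_{t-1}}\Vert P_-^{S_t\mid\mathcal F_{t-1}}\bigr)\right].
\end{equation*}
The action-selection kernels are identical under both labels, so they contribute nothing. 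The persistent logical state needs separate treatment. Because the code exactly corrects the span of $I,Z_j$ and the recovery is fixed, I expect the syndrome distribution of each cycle to be independent of the encoded logical state to the order needed. The logical state enters only through the logical-class component of the error, which extraction does not reveal. I would therefore argue that $P_\theta(s\mid a,\rho_L)=P_{\theta,a}(s)$ exactly: syndrome probabilities are $\operatorname{tr}(\Pi_s U\rho U^\dagger)$ averaged over $\phi$, and the stabilizer projectors commute with the logical operators. Syndrome-dependent trace-preserving logical operations then cannot alter the syndrome law. This reduces everything to $\max_{a}D_{\rm KL}(P_{+,a}\Vert P_{-,a})$ over the finite action menu.

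Next I would expand $P_{\theta,a}(s)$ in $\lambda$ separately for each syndrome class. For a nonzero single-$Z$ syndrome $s$, the amplitude on the syndrome-$s$ branch is $\tfrac{i}{2}\sum_{j\in J_s}\phi_j\nu_{s,j}Z_{j_s}V+O(\phi^3)$, up to third-order contributions from paths of total weight three landing in $s$. Hence $P_{\theta,a}(s)=\lambda\,\omega_{\theta,a,s}+\lambda^2 c_{\theta,a,s}+O(\lambda^3)$, using Gaussian moments of order four. The hypothesis makes the $\lambda$ terms agree, so $P_+(s)-P_-(s)=O(\lambda^2)$ while $P_\pm(s)=\Theta(\lambda)$. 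Positive definiteness and compactness of the shape family give $\omega_{\pm,a,s}\ge c_0>0$ uniformly. For the trivial syndrome, $P(0)=1-O(\lambda)$ and the difference is $O(\lambda^2)$, because all other probabilities' first-order parts match. Every syndrome outside the single-error set has probability $O(\lambda^2)$ under both models.

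Then I would bound the KL divergence by the $\chi^2$ divergence, $D_{\rm KL}(P\Vert Q)\le\sum_s(P(s)-Q(s))^2/Q(s)$, and sum the three classes:
\begin{itemize}
\item the trivial syndrome contributes $O(\lambda^4)$;
\item each single-error syndrome contributes $O(\lambda^4)/\Theta(\lambda)=O(\lambda^3)$;
\item each rare syndrome contributes $O(\lambda^4)/Q(s)$.
\end{itemize}
The last class is the main obstacle. There $Q(s)$ could be much smaller than $\lambda^2$, or have different leading order in the two models, so the $\chi^2$ route is too crude. Instead I would handle rare syndromes directly with $P\log(P/Q)$. The plan is to show $P_\pm(s)=\lambda^2 r_{\pm,a,s}+O(\lambda^3)$, with $r$ strictly positive whenever it is nonzero under either model. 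Strict positivity should follow from positive-definite covariances, since every weight-two error pattern then has a nondegenerate Gaussian fourth moment. The ratio $P_+/P_-$ is then bounded uniformly over the compact family, and the rare-class contribution is at most $\lambda^2\sum_s r_{+,a,s}\log(r_{+,a,s}/r_{-,a,s})+o(\lambda^2)$. Syndromes with weight-$k$ minimal support would be treated similarly at order $\lambda^k$.

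Collecting terms gives a per-cycle bound $K\lambda^2$ that is uniform over actions and shapes, with the same bound in the reverse direction by symmetry of the argument. This yields Eq.~\eqref{eq:syndrome-bound}. For the final claim, I would apply data processing to the binary decision $\hat\theta(Y_N)$. Its error probabilities under the two models give $D_{\rm KL}\ge{\rm kl}(1-\epsilon\Vert\epsilon)$, and this combines with the upper bound to give $N\ge{\rm kl}(1-\epsilon\Vert\epsilon)/(K_{\rm syn}\lambda^2)$.
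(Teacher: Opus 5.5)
\paragraph*{Review of the proof proposal.}
Your architecture matches the paper's: the KL chain rule over the adaptive history, $\chi^2$ control of the zero and single-error sectors, and binary data processing at the end. The gap is in how you dispose of the persistent logical state, which is false at the generality of the theorem.

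You assert $P_\theta(s\mid a,\rho_L)=P_{\theta,a}(s)$ exactly, because syndrome projectors commute with logical operators. That commutation does not help. Write $U(\phi)=\sum_E a_E(\phi)Z(E)$. The syndrome-$s$ probability then contains interference terms $\mathbb E[a_Ea_F^*]\operatorname{Tr}(\rho V^\dagger Z(F)Z(E)V)$ with $s(E)=s(F)=s$, and $Z(F)Z(E)$ can be a nontrivial logical operator. In the nine-qubit patch these averages vanish only by the inversion-parity argument: $Z$ stabilizers have even weight and $Z_L$ has odd weight. That is not a hypothesis of the theorem.

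A counterexample: the four-qubit phase-flip code with stabilizers $X_0X_1,X_1X_2,X_2X_3$ exactly corrects the coherent single-$Z$ span. Its $Z$-type logical $\bar L=Z_0Z_1Z_2Z_3$ has even weight. The errors $Z_0$ and $Z_1Z_2Z_3$ share a syndrome, and their interference adds $-\tfrac18\mathbb E[\prod_j\sin\phi_j]\operatorname{Tr}(\rho V^\dagger\bar LV)$ to that syndrome's probability. This is generically $\Theta(\lambda^2)$ for correlated phases.

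At a fixed history the model-conditioned logical states generally differ, since the labels induce different recovered errors and the admitted logical operations depend on syndromes. So your reduction to the single quantity $\max_aD_{\rm KL}(P_{+,a}\Vert P_{-,a})$ does not cover the memory-carrying adaptive experiments the theorem explicitly admits.

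What you need is a per-cycle bound that is uniform over \emph{pairs} of possibly different logical/reference states:
$D_{\rm KL}(p_{+,a}(\cdot\mid\rho_+)\Vert p_{-,a}(\cdot\mid\rho_-))\le K_{\rm syn}\lambda^2$.
This is what the paper proves.

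For single-error sectors this is easy. Coherent correctability makes the order-$\lambda$ effect operator equal to $\omega_{\theta,a,s}I$, so under matching the difference is $O(\lambda^2)$ for any pair of states.

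For sectors of minimum weight $w_s\ge2$ you need $p_{\theta,a,s}(\rho)\ge c_s\lambda^{w_s}$ uniformly in $\rho$. Your remark that each weight-two pattern has a nondegenerate fourth moment does not supply this. Distinct minimum-weight errors in one sector interfere with state-dependent signs. In the example, the sector reached by $Z_0Z_1$ and $Z_2Z_3$ has leading coefficient $\tfrac1{16}\{\mathbb E[\xi_0^2\xi_1^2]+\mathbb E[\xi_2^2\xi_3^2]+2\mathbb E[\xi_0\xi_1\xi_2\xi_3]\langle\bar L\rangle\}$.

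The paper gets the uniform bound from strict positivity of the Gram matrix of the distinct squarefree monomials $\xi_E$ under a full-support Gaussian, i.e., $B^{(0)}_{\theta,a,s}\ge c_sI$. With that two-sided bound, the ordinary $\chi^2$ estimate already gives $O(\lambda^{w_s})$ for every rare sector. Your separate $P\log(P/Q)$ treatment needs the same lower bound to control the ratio, so it becomes unnecessary.
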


Matching removes the leading model contrast from the frequent single-error
syndromes; sectors first reached by two errors can then supply the leading
information. Uniform conditional-state bounds and the adaptive chain rule
prove Eq.~\eqref{eq:syndrome-bound} \cite{prlv4supp}. For
$\delta=\max_{a,s}|\omega_{+,a,s}-\omega_{-,a,s}|$, uniform positive
constants $C_1,C_2$ give, for sufficiently small $\lambda$,
\begin{equation}
 D_{\rm KL}(P_+^{Y_N}\Vert P_-^{Y_N})
 \le N(C_1\lambda\delta^2+C_2\lambda^2),
 \label{eq:approx-matching}
\end{equation}
and likewise in reverse, uniformly over the declared compact covariance
family. A mismatch
$\delta=O(\sqrt\lambda)$ preserves the inverse-square lower bound, whereas a
fixed nonzero mismatch restores first-order information at sufficiently weak
noise.

Equal individual variances do not by themselves imply matching: degenerate
single-qubit errors sharing a syndrome also reveal off-diagonal covariance
combinations through Eq.~\eqref{eq:leading}. For this patch, fixed covariance
diagonals and only the ideal idle/$X_1$ menu leave 36 off-diagonal coordinates
and two independent leading-rate constraints. The resulting 34-dimensional
kernel describes the ambiguity remaining after learning them. Other interfaces
select their own covariance projections; the kernel dimension does not assert
that every invisible direction changes the useful control.

The matching cost and the leading control benefit can vary independently. In
the present patch, adding
$\theta t(e_0e_3^{\mathsf T}+e_3e_0^{\mathsf T})$ to the covariance shape,
with $e_j$ the $j$th coordinate vector, produces leading-rate mismatch
$\delta=|t|$ while preserving
the ideal leading control gap $3\lambda^2/4$ in both models and leaving the
product-probe law unchanged throughout the uniformly positive covariance
family $|t|\le1/4$ \cite{prlv4supp}. For ideal idle syndrome records,
$k_{\rm syn}=A_0\lambda^2+\lambda t^2+O(\lambda^{5/2})$ along
$t=c_t\sqrt\lambda$ with fixed $c_t$, where
$A_0=(5/2)\log(5/2)-3/2$. At $\lambda=10^{-3}$, the two leading
terms become comparable near $|t|=\sqrt{A_0\lambda}\simeq0.028$.
This asymptotic crossover and covariance range do not specify a finite-budget
tolerance. The full-output cost certificate
below uses $t=0$. At exact matching,
the nonzero coefficient $A_0$ and the rare-syndrome test above show that the
inverse-square learning cost is attained in this example.

A product probe exposes a different projection. Prepare $|++\rangle$ on a
pair with common individual variances and unequal covariance
$b_\theta=\Sigma_{\theta,ij}$, then jointly read the commuting parities
$X_iX_j$ and $Z_iZ_j$, denoted $x,z$. With
$a_p=(\Sigma_{ii}+\Sigma_{jj})/2>|b_\theta|$,
\begin{equation}
 P_\theta(x=-1,z)=\tfrac{\lambda}{4}(a_p+zb_\theta)+O(\lambda^2).
 \label{eq:rare}
\end{equation}
The conditional $z$ laws differ by a finite amount in a sector of probability
$\Theta(\lambda)$. Both KL divergence and squared Hellinger distance are
therefore $\Theta(\lambda)$, giving fixed-error testing with
$\Theta(\lambda^{-1})$ independent probes. This may require entangling
readout despite the product input.

\begin{figure*}[t]
\includegraphics[width=\textwidth]{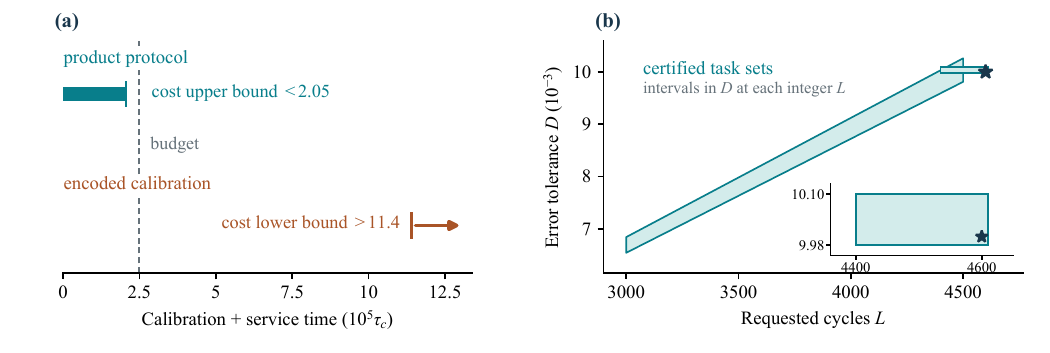}
\caption{Calibration preparation changes which tasks fit the budget.
(a) At the task starred in (b), the teal capped segment is the product-protocol total-cost upper bound, the orange rightward arrow is the encoded-calibration necessary lower bound, and the gray dashed line is the budget: $2.05<2.5<11.4$ in units of $10^5\tau_c$. Thus the product protocol is certified within budget while encoded calibration is excluded; the marks are bounds, not measured costs.
(b) Teal shading marks sufficient certified task sets for which the product protocol meets the target while encoded calibration is excluded at the same budget and deadline. The sloped band and rectangle give intervals of allowed error tolerance $D$ at each integer depth $L$; they are not uncertainty bands or optimized boundaries. The star is $(L,D)=(4600,0.01)$, and the inset enlarges the rectangle using the same $D$ units.
Both panels fix $\lambda=10^{-3}$, $n=10^4$, $r\le20$, $\zeta=0.02$, $B=2.5\times10^5\tau_c$, and a $5000\tau_c$ service deadline. Here $\zeta\tau_c$ is the duration of each added service pulse. Preparation, extraction, and delivered records are ideal; the stated pulse domain and appended phase fault are included in the bounds. The SM gives the certified domains and remaining working parameters.}
\label{fig:cost}
\end{figure*}

\emph{Complete encoded records versus a dedicated probe.---}
For the patch, the encoded comparator may prepare any logical/reference input
through the fixed encoding. Each query returns all eight checks and the
recovered logical output; actions and stopping may depend on prior records. A
label-independent simulation reduces the instrument to syndrome, pulse flag,
and a latent logical Pauli with label-independent processing. Even granting
that latent class and free external quantum memory, each weak-noise query has
KL at most $K_{\rm enc}\lambda^2$ for sufficiently small $\lambda$, uniformly
in the declared family \cite{prlv4supp}.
This full-output bound is specific to this instrument and is not a consequence
of Theorem~\ref{thm:matching} for arbitrary codes.

The dedicated probe resets the same data qubits to
$|+\rangle_1|+\rangle_2|0\rangle^{\otimes7}$ and uses the same extractor.
Two of the returned checks are $m_X=X_1X_2$ and
$m_Z=Z_1Z_2Z_4Z_5$. Put $c=(e^{-\lambda}+e^{-3\lambda})/2$ and
$d=(e^{-\lambda}-e^{-3\lambda})/2$. Their exact joint law is
\begin{equation}
 P_\theta(m_X=x,m_Z=z)=\frac{1+xc-\theta xzd}{4}.
 \label{eq:probe}
\end{equation}
Neither check alone reveals $\theta$; their correlation does. In model $+$,
the two $m_X=-1$ cells have leading probabilities $3\lambda/4$ and
$\lambda/4$, which swap in model $-$. Their information is
$\tfrac12\log(3)\lambda+O(\lambda^2)$ per probe. After $n$ trials, controlled
service is selected when
$T=N(-1,+1)-N(-1,-1)\ge0$ and idle otherwise.
This sign rule uses only observed counts, not $\lambda$; the calibrated noise
strength and model family enter the choice of $n$ and the task-cost proof.

\emph{From information to a guaranteed task.---}
For ideal swapped actions, let $R_{\rm oracle}^{\rm ideal}(L)$ be the error of
the better constant action when the label is known, and let
$R_{\rm blind}^{\rm ideal}(L)$ be the optimum over all randomized committed
binary words without that information. The SM gives both exactly
\cite{prlv4supp}. The condition
$R_{\rm oracle}^{\rm ideal}(L)\le D<R_{\rm blind}^{\rm ideal}(L)$ is the
ideal information-relevance window: learning can make an otherwise
unattainable target possible. At $\lambda=10^{-3}$ and $L=4600$,
the ideal optima are approximately $0.938\%$ and $1.105\%$, respectively,
placing $D=1\%$ between them. Preparation costs and imperfect pulses then
determine which tasks in the window admit a cost separation.

For the physical pulse model in the stated weak-noise regime,
\(0<q_{+,1}<q_{+,0}<1/2\) and \(0<q_{-,0}<q_{-,1}<1/2\). Write
$R_{\theta,a}=[1-(1-2q_{\theta,a})^L]/2$ and
\begin{equation}
 u=\frac{R_{+,0}-D}{R_{+,0}-R_{+,1}},\qquad
 v=\frac{D-R_{-,0}}{R_{-,1}-R_{-,0}}.
 \label{eq:uv}
\end{equation}
When $0<v<u<1$, data processing requires the pre-service transcript to carry
at least ${\rm kl}(u\Vert v)$ and ${\rm kl}(v\Vert u)$ in the two directions.
If $k_+,k_-$ bound the information per encoded query, the necessary query
count is at least \cite{kaufmann2016complexity}
\begin{equation}
 C_{\rm lb}=\max\!\left\{
 \frac{{\rm kl}(u\Vert v)}{k_+},
 \frac{{\rm kl}(v\Vert u)}{k_-}\right\}.
 \label{eq:costbound}
\end{equation}
Let $\tau_c$ be the duration of one idle, complete syndrome extraction, and
fixed recovery.
A complete product probe costs at most $r\tau_c$, including reset,
preparation, extraction, readout, record processing, final decision, and
cleanup; each of the two additional service pulses costs $\zeta\tau_c$.
For a product rule satisfying the target at every prefix and the service
deadline, one service call obeys
\begin{equation}
 \frac{B_{\rm prod}}{\tau_c}\le nr+L(1+2\zeta),\qquad
 \frac{B_{\rm enc}}{\tau_c}\ge C_{\rm lb}+L.
 \label{eq:total}
\end{equation}
A nonempty separating budget interval follows when
$nr+2L\zeta<C_{\rm lb}$. The encoded bound waives calibration preparation,
readout, and pulse overheads, plus external reference and memory costs.

For the finite example, the controlled action uses
$R_{x,1}(\pi+\alpha),R_{x,1}(\pi+\beta)$ bracketing the fixed idle channel,
with an independent $Z_1$ fault of probability $10^{-4}$ appended afterward.
Pulse durations are charged. At
$\lambda=10^{-3}$, $L=4600$, and $D=0.01$, $10^4$ product probes guarantee
error below $9.94\times10^{-3}$ uniformly over
$|\alpha|\le0.1$, $|\beta|\le\pi$. With $r\le20$ and $\zeta=0.02$, the
constructive total cost is below $2.05\times10^5\tau_c$, whereas encoded
calibration requires more than $1.14\times10^6\tau_c$ including service.
At the fixed $2.5\times10^5\tau_c$ budget, the displayed $n=10^4$ protocol
allows $r\le24.5216$; a larger $r$ requires a different separating budget.
Outward bounds certify the nonisolated Fig.~\ref{fig:cost}(b) task sets
\cite{prlv4supp}; the separation is not confined to one depth and tolerance.

Operational records require no additional probe cycles but may reveal too
little information in time. Online syndrome trackers address time-dependent
noise \cite{spitz2018adaptive,bhardwaj2026adaptive}; here the label is stable
and control is committed before service. For an $N$-cycle encoded record with
equal priors, the full-record cap and Pinsker's inequality yield
\begin{equation}
\begin{split}
 D_{\rm KL}(P_+^{Y_N}\Vert P_-^{Y_N})&\le Nk_+,\\
 p_{\rm err}^{\rm eq}&\ge\tfrac12[1-\sqrt{Nk_+/2}]_+ .
\end{split}
 \label{eq:observation}
\end{equation}
Here $[x]_+=\max\{x,0\}$. At the working point,
$k_+=1.30\times10^{-6}$ gives less than $6\times10^{-3}$ nats for
$N=L=4600$ and equal-prior identification error above $47\%$. This concerns
learning from records; a service-time feedback guarantee additionally needs a
joint analysis of acquired information and evolving logical error.

The separation persists as noise weakens. With ideal delivered records, set
$p_z=\lambda/10$ and keep the pulse domain and timing allowances fixed. A
family with
$L_\lambda=\Theta(\lambda^{-2})$,
$n_\lambda=\Theta(\lambda^{-1})$, and
$B_\lambda=b_0\tau_c/\lambda^2$, for positive $b_0$, has a strict full-cost
separation at $D=0.01$ for all sufficiently small $\lambda$, with deadline at
least $L_\lambda(1+2\zeta)\tau_c$ \cite{prlv4supp}. The proof fixes coefficients
that charge service on both sides; the information exponents alone do not
imply a divergent total-cost ratio.

Product calibration requires resolving the rare $m_X=-1$ sector. Independent
flips of the delivered $m_X,m_Z$ fields after ideal extraction and recovery,
never fed back into recovery, preserve first-order information when
$\epsilon_Z<1/2$ is fixed and $\epsilon_X=O(\lambda)$. Fixed
$0<\epsilon_X<1/2$ instead makes it second order \cite{prlv4supp}. At
$\epsilon_X=0.01$, $\epsilon_Z=0$, and the displayed $r=20$ cost, the present
budget admits too little total information for the same task guarantee; thus
Fig.~\ref{fig:cost} is an ideal-record certificate. A separate analytic
certificate uses $3\times10^5$ probes with $r\le3$ and $B=10^6\tau_c$ to
retain a strict separation. This record-flip model does not cover noisy
extraction, preparation errors, or noisy service recovery.

Reusing error-correction records avoids dedicated probe cycles, but useful
control also requires acquiring the relevant information in time. Here a
changed preparation exposes that information more often through the same
checks, and its charged overhead is repaid within the task budget. The
protected logical idle preserves quantum states and their correlations
between active gates, so the comparison matters for quantum computing as
well as storage. Calibration should therefore be judged by the protected
task it enables at total cost, not by learnability alone.

\acknowledgments
This work was supported by the Key-Area
Research and Development Program of Guangdong Province (2018B030326001),
the Science, Technology and Innovation Commission of Shenzhen Municipality
(JCYJ20170412152620376 and KYTDPT20181011104202253), the Innovation Program
for Quantum Science and Technology (2021ZD0301703), Guangdong Major Project
of Basic Research (2025B0303000007), the Shenzhen Science and Technology
Program (KQTD20200820113010023), and the CCF-QuantumCtek Superconducting
Quantum Computing Special Cooperation Program (CCF-QC2025002).

\paragraph*{Data and code availability.}
The code, interval certificates, and figure sources required to verify the
results accompany the submission and are available at
\url{https://github.com/QDynamics/CalibrationCostOfQControl}.

\begin{samepage}
\paragraph*{Author responsibility and tool disclosure.}
Generative-AI tools assisted with derivation checks, numerical cross-checks,
coding, language, and figure layout. The author retains full responsibility
for the scientific content.
\end{samepage}
\nocite{braunstein1994statistical}
\bibliography{references_PRLv6_Task_First_20260919}
\onecolumngrid
\clearpage
\begin{center}
{\bfseries Supplemental Material: Preparation Changes the Cost of Calibration for Quantum Control}\\[0.6ex]
{\small Xiu-Hao Deng}
\end{center}
\vspace{0.5\baselineskip}
\twocolumngrid
\setcounter{section}{0}
\setcounter{equation}{0}
\setcounter{theorem}{0}
\setcounter{lemma}{0}
\setcounter{proposition}{0}
\setcounter{table}{0}
\setcounter{figure}{0}
\setcounter{secnumdepth}{2}
\renewcommand{\theequation}{S\arabic{equation}}
\renewcommand{\thesection}{S\arabic{section}}
\renewcommand{\thetheorem}{S\arabic{theorem}}
\renewcommand{\thelemma}{S\arabic{lemma}}
\renewcommand{\theproposition}{S\arabic{proposition}}
\renewcommand{\thetable}{S\arabic{table}}
\noindent
The Letter compares two calibration routes for the same protected task.
Calibration precedes the unknown logical/reference input and selects a
committed service sequence; the total budget charges acquisition and
execution. This supplement establishes the control-relevant correlation,
the information available through each preparation, and the resulting
task-cost bounds. Section~\ref{sec:syndrome-criterion} proves the
general syndrome-information bound;
Sec.~\ref{sec:correlation-instance} specifies the patch, response mechanism,
and service channels. Sections~\ref{sec:matched-cost}
and \ref{sec:product-preparation} establish the full-record converse and
constructive product protocol; Sec.~\ref{app:weak-noise-proof} proves the
charged weak-noise separation. The remaining sections give mismatch
stability, certified task domains, and the delivered-readout condition.
The general theorem is syndrome-only; the full-quantum-output bound is
patch-specific. Finite guarantees use exact kernels and outward intervals.

\section{When syndrome records reveal a noise correlation}
\label{sec:syndrome-criterion}
Let $V$ exactly correct the coherent span of $I,Z_0,\ldots,Z_{n_{\rm q}-1}$,
as in the Knill--Laflamme conditions \cite{knill1997theory}.
In each cycle, centered Gaussian phases generate
$U(\phi)=\exp[-i\sum_j\phi_j Z_j/2]$, with covariance
$\lambda\Sigma_{\theta,a}$ on $n_{\rm q}$ physical qubits. The positive-definite shapes $\Sigma_{\theta,a}$
are held fixed as $\lambda\to0$, and $a$ belongs to a finite, classically
chosen action menu. Complete syndrome extraction and recovery are ideal;
noise is independent between cycles conditional on the stable label.
Actions and intervening trace-preserving logical operations depend only
on earlier syndromes, action choices, and label-independent private
randomness.

For each nonzero syndrome $s$ produced by a single $Z$ error, let $J_s$
contain the qubits producing that syndrome. Exact correction implies
$Z_jV=\nu_{s,j}Z_{j_s}V$ for $j\in J_s$, where $j_s$ is a representative
and $\nu_{s,j}=\pm1$. Extend $\boldsymbol\nu_s$ by zero outside $J_s$.
The leading syndrome probability is $\lambda\omega_{\theta,a,s}$, with
\begin{equation}
 \omega_{\theta,a,s}
 =\tfrac14\boldsymbol\nu_s^{\mathsf T}\Sigma_{\theta,a}
                       \boldsymbol\nu_s.
 \label{eq:syndrome-leading-rate}
\end{equation}
It is independent of the encoded input. The next theorem quantifies the
learning cost when these leading rates contain no model distinction.

\begin{theorem}[Syndrome-learning cost]
\label{thm:syndrome-matching}\hfill\\
In the setting above, suppose
\begin{equation}
 \omega_{+,a,s}=\omega_{-,a,s}
 \quad\text{for every admitted $a,s$.}
 \label{eq:syndrome-matching}
\end{equation}
The learning record consists of syndromes and classical action choices.
The logical state and an unmeasured reference may persist, with arbitrary
processing within the same encoded space between cycles. For sufficiently
small $\lambda$, every adaptive such record $Y$ with a hard cap of $N$
cycles obeys
\begin{equation}
 D_{\rm KL}(P_+^Y\Vert P_-^Y)\le K_{\rm syn}N\lambda^2,
 \label{eq:generic-syndrome-kl}
\end{equation}
and the reverse bound, for a constant $K_{\rm syn}$ depending on the fixed
code and covariance family. Identifying the label with each error probability
at most $\epsilon_{\rm id}\in(0,1/2)$ therefore requires
\begin{equation}
 N\ge\frac{{\rm kl}(1-\epsilon_{\rm id}\Vert\epsilon_{\rm id})}
              {K_{\rm syn}\lambda^2}.
 \label{eq:generic-syndrome-count}
\end{equation}
\end{theorem}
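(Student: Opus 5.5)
The bound follows from the adaptive chain rule for KL divergence together with a uniform per-cycle bound on the conditional divergence of the syndrome. Write the record as $Y=(A_1,S_1,\ldots,A_\tau,S_\tau)$ with a stopping time $\tau\le N$. The actions, the intervening logical operations and the private randomness are label-independent given the past, so they contribute nothing to the divergence. The chain rule then gives
\begin{equation*}
D_{\rm KL}(P_+^Y\Vert P_-^Y)=\mathbb E_+\sum_{t\le\tau} D_{\rm KL}\bigl(P_+(S_t\mid \mathcal F_{t-1})\Vert P_-(S_t\mid\mathcal F_{t-1})\bigr).
\end{equation*}
The conditional law of $S_t$ depends on the history only through the action $a_t$ and the current encoded state $\rho_{t-1}$ of code plus reference. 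That state is generally label-dependent, because it is the post-recovery state after earlier syndromes. The key is therefore a bound
\[
\sup_{\rho,a} D_{\rm KL}\bigl(p_{+,a}^\rho\Vert p_{-,a}^\rho\bigr)\le K_{\rm syn}\lambda^2,
\]
uniform over all encoded states $\rho$. This bound immediately yields Eq.~\eqref{eq:generic-syndrome-kl}. The reverse direction is identical, with the roles of the two labels exchanged.

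For a fixed $\rho$ and $a$, I will expand the syndrome law $p_{\theta,a}^\rho(s)$ in moments of $\phi$, using
\[
p(s)=\mathbb E\,\mathrm{tr}\bigl[\Pi_s U(\phi)\rho U(\phi)^\dagger\bigr].
\]
I will split the syndromes into three classes.
\begin{itemize}
\item The trivial syndrome, with $p=1-O(\lambda)$.
\item The single-error syndromes, with $p=\lambda\omega_{\theta,a,s}+O(\lambda^2)$.
\item All remaining syndromes, with $p=O(\lambda^2)$.
\end{itemize}
Input independence of the leading term comes from the Knill--Laflamme relation $Z_jV=\nu_{s,j}Z_{j_s}V$. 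Within sector $s$ the first-order amplitude is then
\[
\tfrac{i}{2}\Bigl(\sum_{j\in J_s}\nu_{s,j}\phi_j\Bigr)Z_{j_s}V,
\]
and its squared norm averages to $\lambda\omega_{\theta,a,s}$ for every $\rho$. To get uniformity I will write the remainders via Gaussian moment bounds: $\mathbb E|\phi|^{2k}\le C_k\lambda^k$, uniformly over the compact covariance family, and the Taylor remainder of $U(\phi)$ is controlled by $|\phi|^k$. The $O(\cdot)$ constants then depend only on the code, the finite menu and the compact family, not on $\rho$.

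Next I bound the KL divergence by the $\chi^2$ divergence,
\[
\sum_s (p_+-p_-)^2/p_-.
\]
The three classes contribute as follows.
\begin{itemize}
\item \emph{Trivial syndrome.} Matching gives $p_+-p_-=O(\lambda^2)$, since $1-p$ is the sum of the others. This contributes $O(\lambda^4)$.
\item \emph{Single-error syndromes.} Matching makes the $\lambda$ terms cancel, so $|p_+-p_-|\le C\lambda^2$ while $p_-\ge c\lambda$. This gives $O(\lambda^3)$.
\item \emph{Rare syndromes.} Here $p_\pm=O(\lambda^2)$, but $p_-$ may be much smaller or even vanish for some $\rho$.
\end{itemize}

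The main obstacle is the rare class, where the $\chi^2$ bound fails for lack of a denominator floor. I will first try to show that both laws have a common lower bound proportional to the same state-dependent quantity. For each such $s$ I will write $p_\theta(s)=\mathbb E_\theta\,\|\Pi_s U(\phi)V\psi\|^2$, which is a quadratic form in $\phi$-polynomials with a Gaussian weight. Because both covariances are positive definite within a compact family, their Gaussian densities are mutually bounded up to a constant factor on compact regions. Combined with tail control, this should yield $p_+(s)\le M p_-(s)+O(\lambda^3)$.

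If that comparison is delicate, I will fall back on a coarse-graining. I will merge all rare syndromes and apply the elementary bound
\[
\sum_s p_+\log(p_+/p_-)\le \sum_s \bigl[p_+\log M + (p_- - p_+)\bigr]
\]
on the sectors where $p_+\le Mp_-$. On the exceptional part the mass is $O(\lambda^3)$, and I will use the finite-dimensional structure, a polynomial likelihood ratio, to bound its log factor by $O(\log(1/\lambda))$. That still gives $o(\lambda^2)$.

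Either route gives a per-cycle bound of $K\lambda^2$. Summing over at most $N$ cycles proves Eq.~\eqref{eq:generic-syndrome-kl}. Finally, for any test with error at most $\epsilon_{\rm id}$ under each label, data processing applied to the binary decision gives
\[
{\rm kl}(1-\epsilon_{\rm id}\Vert\epsilon_{\rm id})\le D_{\rm KL}(P_+^Y\Vert P_-^Y)\le K_{\rm syn}N\lambda^2,
\]
which rearranges to Eq.~\eqref{eq:generic-syndrome-count}.
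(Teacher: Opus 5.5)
\paragraph*{Gap: the rare-sector denominator.} Your outline (adaptive chain rule, a per-cycle bound uniform over conditional states, $D_{\rm KL}\le\chi^2$, the split into zero, single-error and rarer syndromes, and binary data processing) matches the paper's. But the step you call the main obstacle is left open, and its premise is false. For an attainable syndrome $s$ of minimum $Z$-weight $w_s$, $p_-(s)$ cannot be much smaller than $\lambda^{w_s}$ for any logical state.

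The missing argument runs as follows. Write $\phi=\sqrt\lambda\,\xi$. The lowest-order amplitude is $\lambda^{w_s/2}(-i/2)^{w_s}\sum_{|E|=w_s,\,s(E)=s}\xi_E Z(E)V$. The distinct squarefree monomials $\xi_E$ have a strictly positive Gram matrix under a full-support Gaussian, since a nonzero polynomial cannot vanish almost everywhere. Every $Z(E)V|\psi\rangle$ is a unit vector. Hence the leading effect obeys $B^{(0)}_{\theta,a,s}\succeq c_sI$ with $c_s>0$. Analyticity and Gaussian parity give $B_{\theta,a,s}(\lambda)=\lambda^{w_s}[B^{(0)}_{\theta,a,s}+O(\lambda)]$. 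So for small $\lambda$, $p_{\theta,a,s}(\rho)$ lies between two positive multiples of $\lambda^{w_s}$ for every state, also with a reference attached (tensor with the identity). Each sector with $w_s\ge2$ then contributes $O(\lambda^{2w_s})/\Omega(\lambda^{w_s})=O(\lambda^{w_s})=O(\lambda^2)$ to $\chi^2$. The proof closes without either of your fallbacks.

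\paragraph*{Why the fallbacks fail, and the choice of supremum.} The density comparison does not work as stated. In rescaled variables, the likelihood ratio of $\mathcal N(0,\Sigma_{+,a})$ to $\mathcal N(0,\Sigma_{-,a})$ is bounded only on compact sets; a global bound in both directions forces $\Sigma_{+,a}=\Sigma_{-,a}$. The region $|\xi|>R$ contributes $\epsilon(R)\lambda^{w_s}$, the same order as $p_\pm(s)$ itself. An $O(\lambda^3)$ remainder therefore needs $R\to\infty$, so $M$ grows like a negative power of $\lambda$. Making the tail small relative to $p_-(s)$ instead already requires the floor.

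The coarse-graining route fails too. Merging syndromes can only lower the divergence, so it cannot upper-bound the information in the fine record. An $O(\log(1/\lambda))$ bound on $\log(p_+/p_-)$ presupposes a polynomial lower bound on $p_-$, which is again the floor.

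Finally, your reduction takes a supremum over a common $\rho$. At a fixed classical history the two models' conditional states differ, so the chain rule needs $\sup_{\rho_+,\rho_-,a}D_{\rm KL}(p^{\rho_+}_{+,a}\Vert p^{\rho_-}_{-,a})$. Your single-error and zero-syndrome estimates survive this, because their leading terms are state independent. A density comparison of $\mathbb E_+f$ with $\mathbb E_-f$ for one $\rho$-dependent test function $f$ does not survive it. The two-sided floor covers arbitrary pairs of states, and the paper states its per-cycle bound in that form.
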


\label{app:syndrome-matching}
\subsection{Leading syndrome rates and uniform probabilities}
Exact correction of the coherent single-$Z$ error span implies
$V^\dagger Z_iZ_jV=\alpha_{ij}I$ by the Knill--Laflamme conditions
\cite{knill1997theory}. When the errors have the same syndrome,
$Z_iZ_j$ preserves the code. Its restriction is a Hermitian Pauli unitary,
so the scalar is $\alpha_{ij}=\pm1$. Thus
$Z_jV=\nu_{s,j}Z_{j_s}V$, as used in
Eq.~\eqref{eq:syndrome-leading-rate}. This conclusion uses coherent
correctability in addition to syndrome equality.

Write $\phi=\sqrt\lambda\,\xi$, where $\xi$ has the full-support Gaussian
law with covariance $\Sigma_{\theta,a}$. For an attainable syndrome $s$,
let $w_s$ be the minimum weight of a $Z$ Pauli producing it. The finite
product expansion of $U(\phi)$ gives the lowest-degree syndrome amplitude
as $\lambda^{w_s/2}M_s(\xi)V$, where
\begin{equation}
 M_s(\xi)V=
 \left(-\frac{i}{2}\right)^{w_s}
 \sum_{\substack{|E|=w_s\\s(E)=s}}
 \xi_E Z(E)V,\qquad \xi_E=\prod_{j\in E}\xi_j.
 \label{eq:generic-leading-amplitude}
\end{equation}
The distinct squarefree monomials $\xi_E$ have a strictly positive Gram
matrix under a full-support Gaussian: a nonzero polynomial cannot vanish
almost everywhere. If $g_s>0$ is its minimum eigenvalue, then for a unit
logical state $|\psi\rangle$,
\begin{equation}
 \mathbb E\|M_s(\xi)V|\psi\rangle\|^2
 \ge \frac{g_s}{4^{w_s}}
       \sum_{\substack{|E|=w_s\\s(E)=s}}
          \|Z(E)V|\psi\rangle\|^2>0.
 \label{eq:generic-gram-lower}
\end{equation}
Each norm in the sum equals one, making the lower bound uniform over the
logical state. Upper bounds follow from the finite Gram matrix norm.

For the Taylor remainder, express the syndrome probability as
$p_{\theta,a,s}(\rho)=\operatorname{Tr}[\rho B_{\theta,a,s}(\lambda)]$
on the finite logical space. Entries of the effect $B$ are finite sums
of Gaussian characteristic functions and are analytic in $\lambda$.
Gaussian parity removes odd total degrees in $\phi$. Consequently,
\begin{equation}
 B_{\theta,a,s}(\lambda)
 =\lambda^{w_s}\bigl[B_{\theta,a,s}^{(0)}+O(\lambda)\bigr],
 \quad B_{\theta,a,s}^{(0)}\ge c_s I>0,
 \label{eq:generic-effect-expansion}
\end{equation}
with an operator-norm remainder uniform over the finite model/action
family. Taking a sufficiently small common $\lambda_0$ ensures uniform
positive lower probability bounds. The operator inequality also tensors
with the reference identity; observed probabilities depend on the reduced
logical state. Unattainable syndromes have zero probability in both models
and can be omitted.

For $w_s=1$, correctability makes
$B_{\theta,a,s}^{(0)}=\omega_{\theta,a,s}I$. Under matching,
the probability difference is therefore $O(\lambda^2)$ even for a pair of
different logical states $\rho_+,\rho_-$. For $w_s\ge2$, both probabilities
are uniformly of order $\lambda^{w_s}$, as is an upper bound on their
difference. Normalization gives an $O(\lambda^2)$ difference in the
zero-syndrome probability, whose leading value is one. This also covers
codes with a weight-one $Z$ stabilizer: its zero syndrome is included
through normalization.

\subsection{Information and adaptive records}
Use the Kullback--Leibler (KL) bound $D_{\rm KL}(p\Vert q)\le\chi^2(p\Vert q)$
on the common support.
The zero-syndrome contribution is $O(\lambda^4)$. A one-error syndrome
contributes $O(\lambda^4)/\Theta(\lambda)=O(\lambda^3)$, and a sector of
minimum weight $w_s\ge2$ contributes $O(\lambda^{w_s})$.
There are finitely many sectors and actions, so, uniformly for any pair
of logical states,
\begin{equation}
 D_{\rm KL}\!\left(
 p_{+,a}(\cdot\mid\rho_+)\Vert p_{-,a}(\cdot\mid\rho_-)
 \right)\le K_{\rm syn}\lambda^2.
 \label{eq:generic-uniform-pair-kl}
\end{equation}
The reverse bound is identical. For a fixed classical history the
model-conditioned states can differ, and this is precisely the pair
covered by Eq.~\eqref{eq:generic-uniform-pair-kl}. Classical action rules
and private coins are label independent given that history. The KL chain
rule, with an early stop padded by null symbols to the hard cap $N$,
proves Eq.~\eqref{eq:generic-syndrome-kl}. Data processing to the binary
decision gives Eq.~\eqref{eq:generic-syndrome-count}.

If a fixed fresh input and action give distinct positive leading rates
in a minimum-weight-two syndrome, its binary occurrence law has KL and
squared Hellinger distance $\Theta(\lambda^2)$. Independent repetitions
attain the inverse-square testing order. Without that contrast,
information may be smaller: the theorem is not a universal tightness claim.

\subsection{A product probe with first-order information}
Prepare $|++\rangle$ on qubits $i,j$ with common individual variances and
different covariances $b_{\theta,p}=\Sigma_{\theta,ij}$.
The commuting $X_iX_j$ and $Z_iZ_j$ outcomes $x,z$ have exact law
\begin{equation}
 \begin{split}
 P_\theta(x,z)=\tfrac14\bigl[1
 &+x e^{-a_p\lambda}\cosh(b_{\theta,p}\lambda)\\
 &-xz e^{-a_p\lambda}\sinh(b_{\theta,p}\lambda)\bigr].
 \end{split}
 \label{eq:generic-product-law}
\end{equation}
Here $a_p=(\Sigma_{ii}+\Sigma_{jj})/2>|b_{\theta,p}|$.
Its $x=-1$ sector has probability
$\lambda(a_p+zb_{\theta,p})/4+O(\lambda^2)$. The two positive conditional $z$
laws differ by a fixed amount, in a sector of probability
$\Theta(\lambda)$. Its KL and squared Hellinger distance are therefore
$\Theta(\lambda)$ in both model directions. The common sector adds at
most $O(\lambda^2)$ KL. Product affinity under independent trials gives
an $O(\lambda^{-1})$ fixed-error test, and binary data processing with
the KL upper bound gives the matching necessary order. The readout may be entangling despite the product input; the patch
construction below uses the same extractor as encoded calibration.

\subsection{Why equal variances alone are insufficient}
The present surface code already provides a counterexample. Its
weight-two stabilizer $Z_0Z_3$ makes $Z_0$ and $Z_3$ share one nonzero
syndrome. Choose positive-definite covariance shapes
\begin{equation}
 \Sigma_\pm=I\pm\chi(e_0e_3^{\mathsf T}+e_3e_0^{\mathsf T}),
 \qquad 0<\chi<1.
 \label{eq:generic-degenerate-counterexample}
\end{equation}
Their diagonal entries coincide, but this syndrome has probability
$\lambda(1\pm\chi)/2+O(\lambda^2)$. Its binary occurrence law already
has first-order KL, with positive coefficient
\begin{equation}
 \frac{1+\chi}{2}\log\frac{1+\chi}{1-\chi}-\chi>0.
 \label{eq:generic-counterexample-kl}
\end{equation}
Thus, even a code correcting every single-qubit error can retain a first-order
correlation signal when errors share a syndrome.

For this patch the single-$Z$ groups are
$\{0,3\},\{1\},\{2\},\{4\},\{5,8\},\{6\},\{7\}$.
The two degenerate pairs have positive relative signs. With fixed
diagonals, matching imposes only
$\Delta\Sigma_{03}=\Delta\Sigma_{58}=0$. These are two independent
linear constraints on 36 off-diagonal entries. Their 34-dimensional
kernel describes first-order syndrome-invisible covariance directions,
with positive-definite perturbations available locally about an interior
covariance. A change in preferred logical control imposes further
conditions. The covariance pair of Eq.~\eqref{eq:qec-covariance} satisfies
both matching constraints, also after ideal $X_1$ conjugation.

\section{A correlation that changes the control decision}
\label{sec:correlation-instance}
\label{sec:logical}
Use the fixed $[[9,1,3]]$ rotated surface code
\cite{tomita2014surface}, with nine data qubits and eight reusable ancillas.
A storage cycle is a fixed idle followed by ideal complete syndrome
extraction and fixed recovery. The idle applies
\begin{equation}
 U(\phi)=\exp\!\left[-\frac{i}{2}\sum_{j=0}^{8}\phi_j Z_j\right],
 \qquad \phi\sim\mathcal N(0,K_\theta),
 \label{eq:qec-noise}
\end{equation}
where $\theta\in\{+,-\}$ is fixed throughout calibration and service, and
\begin{equation}
 \begin{gathered}
 K_\pm=\lambda(I+v_\pm v_\pm^{\mathsf T}),\qquad\lambda>0,\\
 v_+=(0,1,1,0,1,1,0,0,0)^{\mathsf T}.
 \end{gathered}
 \label{eq:qec-covariance}
\end{equation}
Only component 1 changes sign in $v_-$. Equivalently,
$\phi_j=\sqrt\lambda(\eta_j+v_{\theta,j}\xi)$ for independent standard-normal
local and common fluctuations. Signed common couplings describe correlated
dephasing \cite{yoshihara2010correlated,vonlupke2020spectroscopy}.
The known $\lambda$ sets the noise strength; the unknown model label fixes the
correlation sign through calibration and service. Conditional on that label,
all cycles are independent. The individual phase laws and $|K_{ij}|$ agree.

The two actions are no pulse ($0$), or $X_1$ conjugation bracketing the idle
($1$). Calibration selects a sequence in this binary control menu before
service. During the protected operation, syndromes drive the fixed recovery
but do not update that control sequence. Retaining them for noise inference
is considered separately in Sec.~\ref{app:task-cost-proof}.

For encoding $V$, syndrome projector $\Pi_s$, and recovery $R_s$,
\begin{equation}
 V^\dagger R_s\Pi_sU(\phi)V
 =\gamma_{s,0}(\phi)I+\gamma_{s,1}(\phi)Z_L.
 \label{eq:recovered-amplitude}
\end{equation}
The parity argument below removes the averaged cross term. Thus
$\mathcal E_q(\rho)=(1-q)\rho+qZ_L\rho Z_L$, with
$q=\sum_s\mathbb E|\gamma_{s,1}|^2$, has half-diamond error $q$.

All single-$Z$ errors are corrected, so the leading logical risk is governed
by a fourth-order phase moment:
\begin{equation}
 Q_4(K)=\frac1{16}\sum_s
 \sum_{\substack{(i,j)\in\mathcal B_s\\(k,l)\in\mathcal B_s}}
 (K_{ij}K_{kl}+K_{ik}K_{jl}+K_{il}K_{jk}),
 \label{eq:wick-risk}
\end{equation}
where $\mathcal B_s$ contains pairs whose recovery leaves logical $Z$.
Ideal conjugation swaps the two logical risks. The exact kernel below
bounds the finite-noise rates and verifies isolated-cycle recovery optimality.

A committed word is a control sequence chosen before service and held fixed
during it. Conditional independence gives its exact channel error
\begin{equation}
q_\theta(a_1\cdots a_L)=\tfrac12\left[1-\prod_{j=1}^L(1-2q_{\theta,a_j})\right].
\label{eq:qec-composition}
\end{equation}
This identity includes arbitrary reference-entangled inputs.

\subsection{Recovery kernel and physical response}
\label{app:qec-kernel}
Integer masks use little-endian qubit indices $0,\ldots,8$.
The eight checks in order are
\[
 \begin{gathered}
 X(27),\; Z(54),\; Z(216),\; X(432),\\
 X(192),\; X(6),\; Z(9),\; Z(288).
 \end{gathered}
\]
Here $X(m)$ or $Z(m)$ acts on the set bits of $m$. Logical operators are
$X_L=X(73)$ and $Z_L=Z(7)$. Ideal extraction uses 24 controlled-NOT (CNOT) gates, eight Hadamards,
and eight ancilla measurements/resets; these operations are part of the
nominal cycle, not assumed to take zero time. Their noise is ideal in this
example. Encoding is also ideal and outside the storage task, whose input
is an encoded unknown qubit. For each syndrome, choose a minimum-weight
$Z$ mask, breaking ties by integer value. A nonzero $X_1$ flag is first
corrected with $X_1$. All remaining $Z$ recoveries are independent of that flag.

Expand the idle as
$U(\phi)=\sum_E a_E(\phi)Z(E)$, where
\[
 a_E=\prod_{i\in E}[-i\sin(\phi_i/2)]
      \prod_{i\notin E}\cos(\phi_i/2).
\]
For errors of syndrome $s$, let $\ell(E)\in\{0,1\}$ denote the logical
$Z$ class of $R_sZ(E)$. Then
\[
 \gamma_{s,0}=\sum_{s(E)=s,\,\ell(E)=0}a_E,\qquad
 \gamma_{s,1}=\sum_{s(E)=s,\,\ell(E)=1}a_E .
\]
All $Z$ stabilizers have even weight and logical $Z$ has odd weight.
Since $a_E(-\phi)=(-1)^{|E|}a_E(\phi)$, inversion symmetry gives
$\mathbb E\gamma_{s,0}\gamma_{s,1}^*=0$. The syndrome-resolved channel is
\[
 \begin{gathered}
 \mathcal C_s(\rho)=p_{s0}\rho+p_{s1}Z_L\rho Z_L,\\
 p_{s0}=\mathbb E|\gamma_{s,0}|^2,\qquad p_{s1}=\mathbb E|\gamma_{s,1}|^2 .
 \end{gathered}
\]
The sum of all weights is one. No Pauli twirl has been inserted.

Correctability removes constant and single-angle terms from
$\gamma_{s,1}$. Its leading even-parity amplitude is
$-\frac14\sum_{(i,j)\in\mathcal B_s}\phi_i\phi_j$; Wick averaging yields
$q=Q_4(K)+O(\lambda^3)$. The 18 malignant pairs among the 36 pairs form
seven syndrome groups, listed in Eq.~\eqref{eq:patch-malignant-pairs}.

\subsubsection{Quartic path interference and a syndrome-response check}
For ideal $X_1$ conjugation and no appended fault, write
$p_{\theta,a}(s,z)$ for the probability of phase syndrome $s$ and recovered
logical class $z\in\{0,1\}$, and put
\begin{equation}
 \Delta p_\theta(s,z)=p_{\theta,0}(s,z)-p_{\theta,1}(s,z),\quad
 \Delta q_\theta=q_{\theta,0}-q_{\theta,1}.
 \label{eq:response-differences}
\end{equation}
All differences are idle minus conjugated idle with the known frame restored.

Within-block interference requires
$E\mathbin{\mathrm{xor}}F$ to be a $Z$ stabilizer, and a conjugation-odd
term requires that stabilizer to contain site 1. No weight-two stabilizer
does; the only fourth-degree supports are
\begin{equation}
 \begin{aligned}
 S_1&=\{1,2,4,5\} &&\text{(mask 54)},\\
 S_2&=\{1,2,4,8\} &&\text{(mask 278)}.
 \end{aligned}
 \label{eq:response-supports}
\end{equation}
For an ordered disjoint partition $E\mathbin{\dot\cup}F=S_r$, the leading
amplitude product is
$(-1)^{|E|}\prod_{j\in S_r}\phi_j/16$.  Reversing $\phi_1$ changes its sign,
so the idle-minus-conjugated difference doubles this contribution once, from
$1/16$ to $1/8$ for each ordered pair.  Define
\begin{equation}
 M_\theta=\mathbb E_\theta[\phi_1\phi_2\phi_4(\phi_5+\phi_8)].
 \label{eq:response-moment}
\end{equation}
Assigning the sixteen ordered partitions of each support to their common
syndrome and logical class gives
\begin{equation}
 \Delta p_\theta(s,z)=c_{s,z}\frac{M_\theta}{4}+O(\lambda^3),
 \label{eq:response-block-expansion}
\end{equation}
where the complete table of nonzero fourth-order coefficients is
\begin{equation}
\begin{array}{c|c}
 (s,z)&c_{s,z}\\ \hline
 (0,0),(1,1),(10,0),(11,0)&+1\\
 (2,0),(3,0),(8,0),(9,0)&-1
\end{array}
 \label{eq:response-block-table}
\end{equation}
All other fourth-order block responses vanish, and the displayed entries sum
to zero as normalization requires.  The syndrome integers use the ordered
$X$ checks $X(27),X(432),X(192),X(6)$, with the first check in the least
significant bit; no translation from another bit convention is being used.

Summing the logical-error blocks, and separately summing both logical
classes of syndromes 10 and 11, proves
\begin{equation}
 \begin{aligned}
 \Delta q_\theta
 &=\frac{M_\theta}{4}+O(\lambda^3)\\
 &=\frac{\Delta P_{\theta}(10)+\Delta P_{\theta}(11)}{2}
   +O(\lambda^3),
 \end{aligned}
 \label{eq:patch-syndrome-response}
\end{equation}
where $\Delta P_\theta(s)=\sum_z\Delta p_\theta(s,z)$.  For the Gaussian
local-plus-common family in Eq.~\eqref{eq:qec-covariance}, with
$\theta\in\{+1,-1\}$ and $v_{\theta,1}=\theta$, the distinct-index moment is
$M_\theta=3\theta\lambda^2$.  Hence
\begin{equation}
 q_{\theta,0}-q_{\theta,1}
 =\frac{3\theta}{4}\lambda^2+O(\lambda^3).
 \label{eq:patch-gaussian-response}
\end{equation}
The $\phi_8$ moment vanishes for this coupling vector, but belongs to the
structural rule. Wick's rule enters only the Gaussian simplification.
The logical response occurs in $(s,z)=(1,1)$; syndromes 10 and 11 are its
observable proxies, with leading probabilities $5\lambda^2/4$ and
$\lambda^2/2$ swapped by action or label. Their A/B counts therefore need
$\Theta(\lambda^{-2})$ independent cycles for a fixed-error decision.

\paragraph*{Scoped one-logical-qubit CSS selection rule.}
For completeness, the same path argument has the following limited extension.
Consider a Calderbank--Shor--Steane (CSS) encoding of one logical qubit that
(i) corrects every single
$Z$ error; (ii) uses a fixed minimum-weight $Z$ recovery with a declared tie
rule; and (iii) is driven by
$\phi=\sqrt\lambda\,\boldsymbol\xi$, where the law of
$\boldsymbol\xi$ is independent of $\lambda$, globally inversion symmetric,
and has finite $\mathbb E\|\boldsymbol\xi\|_1^6$.  Fix a conjugated site $k$
and assume (iv) no weight-two element of the $Z$-stabilizer group
$\mathcal S_Z$ contains $k$; and (v) every element of $\mathcal S_Z$ has even
weight while the chosen logical $Z_L$ has odd weight.  We use the standard
positive-sign convention for the represented $Z$ stabilizer masks, so each
$Z(S)$ with $S\in\mathcal S_Z$ acts as $+I$ on the code space.  Then the
conjugation-odd response of every retained syndrome-and-class probability is
independent of the logical input, has no contribution below degree four, and its degree-four
part is as follows.  Put
\begin{equation}
 m_{S,E}=(-1)^{|E|}
 \mathbb E\!\left[\prod_{j\in S}\phi_j\right].
 \label{eq:scoped-css-moment}
\end{equation}
Then
\begin{equation}
\begin{aligned}
 \Delta P(s,z)&=\frac18
 \sum_{\substack{S\in\mathcal S_Z,\ |S|=4,\\ k\in S}}
 \sum_{\substack{E\mathbin{\dot\cup}F=S\\\text{ordered}}}m_{S,E}\\
 &\quad\times
 \mathbf 1\!\left\{(s(E),\ell(E))=(s,z)\right\}
 +O(\lambda^3).
 \label{eq:scoped-css-quartic-rule}
\end{aligned}
\end{equation}
Indeed, class-preserving interference requires
$E\mathbin{\mathrm{xor}}F\in\mathcal S_Z$.  Condition (iv) excludes a
degree-two conjugation-odd term, while condition (v) excludes odd stabilizer
supports altogether.  At degree four the leading amplitude product is
$m_{S,E}/16$ per ordered pair.  The idle-minus-conjugated sign change doubles
that ordered-pair contribution once, yielding the factor $1/8$ in
Eq.~\eqref{eq:scoped-css-quartic-rule}.  Condition (v) makes the two
recovered logical classes have opposite inversion parity, removing their
cross term and the logical-input dependence.  Taylor's theorem and the finite
sixth moment bound the averaged remainder by $O(\lambda^3)$.

Allowed moments can vanish or cancel, and a block response need not change
logical error. A syndrome-count representation requires a separate
code/recovery identity; here its weight is
$\frac12\mathbf1_{\{10,11\}}$. A weight-two stabilizer at the conjugated site
or an even-weight logical $Z$ violates the stated hypotheses. The rule alone
does not establish a learning advantage.

\subsubsection{Finite-parameter bounds}
Let $\mathcal O_0$ be the 16 bit strings generated from zero by the four
$X$ checks, and $\mathcal O_1=\mathcal O_0\mathbin{\mathrm{xor}}73$.
The encoded basis states are uniform superpositions on these disjoint
orbits. On their union, define
\[
 a_s(z)=\frac{(-1)^{R_s\cdot z}}{32},\qquad
 b_s(z)=a_s(z)
 \begin{cases}1,&z\in\mathcal O_0,\\-1,&z\in\mathcal O_1.\end{cases}
\]
With $e_i(z)=(-1)^{z_i}$, put
\[
 d_{zz'}=\frac{[e(z)-e(z')]^{\mathsf T}K[e(z)-e(z')]}{8},
 \qquad G_{zz'}=e^{-d_{zz'}} .
\]
Gaussian averaging gives $p_{s0}=a_s^{\mathsf T}Ga_s$ and
$p_{s1}=b_s^{\mathsf T}Gb_s$. An appended $Z_1$ multiplies both vectors
by $(-1)^{z_1}$. Thus all four fault/no-fault risks can be evaluated on a
32-dimensional kernel rather than a sampled phase ensemble.

At rational covariance points each risk or margin is
$\sum_d h_d e^{-d}$ with rational $h_d$. For $d\ge0$,
\[
 \left|e^{-d}-\sum_{j=0}^4\frac{(-d)^j}{j!}\right|\le d^5/120.
\]
Grouping equal $d$ before applying this remainder retains the cancellations.
The certificate stores outward endpoints on denominator $10^{15}$.
Alternating bounds for $\sin(0.05)$ through degrees seven and nine bound
the pulse mixture. The full pulse-box rates are given once, in
Eq.~\eqref{eq:full-box-service-rates}.

\subsubsection{Why a weak single-cycle decoder is not responsible}
For a syndrome-resolved $I/Z$ channel, the best deterministic completely positive
trace-preserving (CPTP) recovery
has error
\begin{equation}
 q_{\rm opt}=\sum_s\min(p_{s0},p_{s1}).
 \label{eq:one-cycle-optimum}
\end{equation}
To see the lower bound, use a Bell input. If $T_j$ are the logical Kraus
operators of an arbitrary recovery after syndrome $s$, its contribution
to entanglement fidelity is
$\sum_{j,l}p_{sl}|\operatorname{Tr}(T_jZ^l)|^2/4$.
Pauli orthogonality and $\sum_jT_j^\dagger T_j=I$ bound this by
$\max_l p_{sl}$. A Bell measurement therefore lower-bounds half-diamond
error by Eq.~\eqref{eq:one-cycle-optimum}. Choosing the more probable
logical Pauli for each syndrome attains the bound.

Rational kernel intervals give $p_{s0}>p_{s1}$ for all 16 syndromes in
both ideal models, with margins exceeding $1.24\times10^{-7}$.
The fixed recovery therefore attains this isolated-cycle optimum.
The multi-cycle comparison still fixes the extraction and recovery;
it does not optimize arbitrary multi-round decoders.

\subsection{Optimal control sequences with and without the noise label}
The blind comparator must include every classical randomization over complete
control words, rather than only the two deterministic constant words.  Put
$s(q)=1-2q$ and
\[
 a=s(q_{+,0}),\quad b=s(q_{+,1}),\quad
 c=s(q_{-,0}),\quad d=s(q_{-,1}).
\]
Assume $b>a$, $c>d$, $c>a$, and $b>d$, with all four factors in $(0,1)$,
and write $a_L=a^L$, $b_L=b^L$, $c_L=c^L$, and $d_L=d^L$.  If the label is
unknown before service, the minimum worst-label error over all distributions
on length-$L$ committed binary words is
\begin{equation}
 R_{\rm blind}(L)=\frac12\left[
 1-\frac{b_Lc_L-a_Ld_L}{b_L-a_L+c_L-d_L}\right].
 \label{eq:matched-optimum}
\end{equation}
It is attained by drawing the constant word $1^L$ once per program with
probability
\begin{equation}
 t_*=\frac{c_L-a_L}{b_L-a_L+c_L-d_L}
 \label{eq:matched-weight}
\end{equation}
and drawing $0^L$ otherwise.  To prove optimality, consider a deterministic
word with $m$ action-1 cycles.  Its two label-conditioned survivals are
$a^{L-m}b^m$ and $c^{L-m}d^m$, each convex in $m$.  Replacing the word by the
endpoint mixture with weights $1-m/L$ and $m/L$ therefore increases both
survivals.  Applying this replacement inside any word distribution reduces
the minimax problem to endpoint mixtures.  Their survivals are
$a_L+t(b_L-a_L)$ and $c_L+t(d_L-c_L)$; one increases and the other decreases,
and the stated order assumptions place their intersection at $t_*\in(0,1)$.
Maximizing the smaller survival proves Eqs.~\eqref{eq:matched-optimum} and
\eqref{eq:matched-weight}.  Because every factor lies below one, the same
mixture's final-depth error bound also protects every earlier prefix.

For ideal swapped actions, $a=d=x<b=c=y$, this reduces to
\begin{equation}
 R_{\rm blind}^{\rm ideal}(L)=\frac{2-x^L-y^L}{4},\qquad
 R_{\rm oracle}^{\rm ideal}(L)=\frac{1-y^L}{2}.
 \label{eq:ideal-optima}
\end{equation}
The oracle knows the label. At $\lambda=10^{-3}$ and $L=4600$, the
ideal kernel gives $R_{\rm oracle}^{\rm ideal}=0.009377084969\ldots$
and $R_{\rm blind}^{\rm ideal}=0.011053304716\ldots$, bracketing $D=0.01$.
Imperfect pulses use the four actual rates in
Eq.~\eqref{eq:matched-optimum}; costs and deadline further constrain feasibility.

\subsection{Imperfect pulses at the recovery interface}
We model actual pulses as $R_{x,1}(\pi+\alpha)$ and
$R_{x,1}(\pi+\beta)$, and an independent $Z_1$ fault of probability $p_z$
after the pair. The appended fault belongs to the imperfect controlled action;
the idle action has no appended pulse fault. These imperfections are shared by
the blind and informed controllers. Complete ideal extraction distinguishes the single-$X_1$
flag, corrects it, and applies the same $Z$ recovery irrespective of that
flag.

\begin{proposition}[Recovery-filtered pulse error]
\label{prop:pulse}
For diagonal idle Kraus operators and this CSS recovery interface, the
logical channel after marginalizing the $X_1$ flag obeys
\begin{equation}
 \mathcal C_{\alpha,\beta}
 =\cos^2(\alpha/2)\mathcal C_1+\sin^2(\alpha/2)\mathcal C_0,
 \label{eq:pulse-identity}
\end{equation}
where $\mathcal C_1$ is ideal conjugation and $\mathcal C_0$ is no pulse.
This holds for arbitrary logical/reference inputs, and separately for each
retained $X$-check syndrome.
\end{proposition}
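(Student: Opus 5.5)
I plan to work directly at the level of Kraus operators on the nine data qubits. First I expand the first pulse as $R_{x,1}(\pi+\alpha)=\cos(\tfrac{\pi+\alpha}{2})I-i\sin(\tfrac{\pi+\alpha}{2})X_1$, which equals $-\sin(\alpha/2)I-i\cos(\alpha/2)X_1$. The second pulse is treated the same way with $\beta$. For each diagonal idle Kraus operator $U$ (here $U(\phi)$, a function of $Z$'s only), the bracketed evolution $R_{x,1}(\pi+\beta)\,U\,R_{x,1}(\pi+\alpha)$ then splits into four terms, $II$, $IX_1$, $X_1I$ and $X_1X_1$, with coefficients built from $\cos$ and $\sin$ of $\alpha/2,\beta/2$. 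Two key facts drive everything. First, $X_1UX_1=U'$ is again diagonal: it is the idle with $\phi_1\to-\phi_1$, which is exactly the ideal-conjugation Kraus operator. Second, $X_1U=U'X_1$. So every term can be written as either (diagonal)$\cdot I$ or (diagonal)$\cdot X_1$.

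Next I group by the $X$-error part. The terms with even $X_1$ content give the operator $A=s_\alpha s_\beta U - c_\alpha c_\beta U'$ (up to a global phase). The odd terms give $B=i(s_\beta c_\alpha U' + c_\beta s_\alpha U)X_1$, or $U$ and $U'$ combined with $X_1$ acting on one side. Here $s_\alpha=\sin(\alpha/2)$ and $c_\alpha=\cos(\alpha/2)$. The appended $Z_1$ fault commutes with this decomposition, so I can carry it along as a separate mixture at the end.

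Complete ideal extraction measures the $X$-error syndrome, which detects $Z$-type checks. This deterministically separates the even branch (flag absent) from the odd branch (single $X_1$ flag present). The reason is that $Z$ errors do not affect these checks, and the diagonal parts commute with the $Z$ checks. The flag outcome therefore acts as a projector onto one branch, and branches cannot interfere. After the $X_1$ correction, both branches go through the same $Z$ recovery. So the flag-marginalized channel is a sum over branches of $\rho\mapsto\mathcal R(D\rho D^\dagger)$, where $D$ is the diagonal factor of that branch.

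The hard step comes here. Each branch is a coherent superposition $xU+yU'$, so the recovered channel produced by $D=xU+yU'$ contains cross terms between $U$ and $U'$. I expect to handle this by averaging over the Gaussian phase together with its $\phi_1$-reflection. Summing the two branches produces the coefficient combinations $|s_\alpha s_\beta|^2+|s_\beta c_\alpha|^2$ and $|c_\alpha c_\beta|^2+|c_\beta s_\alpha|^2$. These collapse to $s_\alpha^2$ on $\mathcal C_0$ and $c_\alpha^2$ on $\mathcal C_1$, independent of $\beta$. The cross terms carry coefficients $-s_\alpha s_\beta c_\alpha c_\beta$ in the even branch and $+s_\beta c_\alpha c_\beta s_\alpha$ in the odd branch, with the same operator structure $U\rho U'^\dagger$. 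So after marginalizing the flag they cancel exactly. This cancellation is the heart of the argument, and I would check the sign carefully using the $-i$ phases.

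The cancellation is Kraus-wise and is linear on the extended space. It therefore holds for arbitrary reference-entangled inputs. It also holds separately for each retained $X$-check syndrome (the phase-syndrome outcome), because the $Z$-syndrome projector and recovery act identically in both branches. Finally I average over $\phi$ and obtain Eq.~\eqref{eq:pulse-identity}.
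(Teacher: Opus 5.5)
Your route matches the paper's. You expand both overrotated pulses, split each diagonal Kraus operator into even and odd $X_1$ parts, and let the $Z(54)$ flag separate them. You then correct the odd branch, pass both branches through the common $X$-check projector and $Z$ recovery, and cancel the cross terms between the two flag branches.

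\textbf{The weight bookkeeping is wrong as written.} This is exactly the step that decides whether $\alpha$ or $\beta$ survives. Your displayed combinations are $|s_\alpha s_\beta|^2+|s_\beta c_\alpha|^2=s_\beta^2$ and $|c_\alpha c_\beta|^2+|c_\beta s_\alpha|^2=c_\beta^2$. They pair the $U$ weight of $A$ with the $U'$ weight of the odd branch, and vice versa. Taken literally they give $s_\beta^2\mathcal C_0+c_\beta^2\mathcal C_1$. That is $\beta$ dependent and fails the check $\alpha=0$, where a perfect first pulse must give $\mathcal C_1$ for every $\beta$.

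Relatedly, your $B$ has $X_1$ on the wrong side. The odd part is $iX_1(s_\beta c_\alpha U'+c_\beta s_\alpha U)=i(s_\beta c_\alpha U+c_\beta s_\alpha U')X_1$. The flag correction acts on the output side, so the diagonal factor entering $\mathcal R(D\rho D^\dagger)$ is $X_1$ times the odd part, namely $i(s_\beta c_\alpha U'+c_\beta s_\alpha U)$. Reading $D$ off the trailing-$X_1$ form instead would swap $U\leftrightarrow U'$ and again produce $\beta$ weights.

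With the correct factor, the no-pulse path collects $s_\alpha^2s_\beta^2+s_\alpha^2c_\beta^2=s_\alpha^2$. The conjugated path collects $c_\alpha^2c_\beta^2+c_\alpha^2s_\beta^2=c_\alpha^2$. Your cross coefficients $\mp s_\alpha c_\alpha s_\beta c_\beta$ then cancel as you say. This is the paper's $L_0,L_1$ computation.

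\textbf{Drop the proposed average over $\phi$ together with its $\phi_1$ reflection.} It is unnecessary: the branch cancellation is exact for each diagonal Kraus operator before any averaging. That is why the proposition holds for arbitrary diagonal idle Kraus operators and separately for every retained $X$-check syndrome. It is also unavailable. The map $\phi_1\mapsto-\phi_1$ sends $K_\theta$ to $K_{-\theta}$, so the law is not reflection invariant. Moreover, a single branch's cross term does not average away; it tends to a nonzero multiple of $\rho$ as $\phi\to0$.

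Finally, the projector shared by the two branches is the $X$-check (phase-syndrome) projector $\Pi_s$, not a $Z$-syndrome projector. Moving the $X_1$ correction through it uses $X_1\Pi_s=\Pi_sX_1$.
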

The following proof is exact in both angles. Put $p_x=\sin^2(\alpha/2)$;
for an independent random $\alpha$, use its expectation instead.
The factorization assumes the pulses bracket an unchanged idle channel.

The appended $Z_1$ fault gives
\begin{equation}
\begin{split}
 q_{{\rm ctrl},+}={}&(1-p_z)[(1-p_x)q_-+p_xq_+]\\
 &+p_z[(1-p_x)q_-^Z+p_xq_+^Z],
 \label{eq:pulse-risk}
\end{split}
\end{equation}
with signs exchanged for model $-$. Here $q_\pm^Z$ are the recovered risks
with the extra fault. Since that fault alone is correctable,
$q_\pm^Z=O(\lambda)$, giving a leading contribution of order $p_z\lambda$.
Both sides use the same imperfect actions;
Eq.~\eqref{eq:full-box-service-rates} gives the finite working-point rates.

\paragraph*{Proof.}
\label{app:pulse}
Let $N_\mu$ be a diagonal idle Kraus operator, $X=X_1$,
$W_{1,\mu}=XN_\mu X$, and $W_{0,\mu}=N_\mu$. Up to a global sign, the physical
operator for the two overrotated pulses is
$R_x(\beta)W_{1,\mu} R_x(\alpha)$.
Write $c_\alpha=\cos(\alpha/2)$ and $s_\alpha=\sin(\alpha/2)$,
and similarly for $\beta$. Its even-$X$ part is
$c_\beta c_\alpha W_{1,\mu}-s_\beta s_\alpha W_{0,\mu}$.
The odd part is in a distinct $Z$-check syndrome sector and, after
correcting $X$, becomes
$-i(s_\beta c_\alpha W_{1,\mu}+c_\beta s_\alpha W_{0,\mu})$.
After applying the common $X$-check projection and $Z$ recovery, call
their logical images $W_{1,s\mu},W_{0,s\mu}$. The two flag-branch Kraus
operators are
\begin{align*}
 L_0&=c_\beta c_\alpha W_{1,s\mu}-s_\beta s_\alpha W_{0,s\mu},\\
 L_1&=-i(s_\beta c_\alpha W_{1,s\mu}+c_\beta s_\alpha W_{0,s\mu}).
\end{align*}
In $\sum_{j=0}^1L_j\rho L_j^\dagger$ the cross coefficients cancel.
The remaining coefficients are $c_\alpha^2$ and $s_\alpha^2$,
proving Eq.~\eqref{eq:pulse-identity} for every reference input.
Summation over $\mu$ and, if desired, $s$ preserves the equality.

Appending the stochastic $Z_1$ gives Eq.~\eqref{eq:pulse-risk};
commuting $X_1$ and $Z_1$ only introduces a branchwise global sign.
For centered Gaussian phases, the parity argument still eliminates the
logical cross term after this fixed extra Pauli. The extra fault alone
is corrected, so a malignant amplitude requires at least one remaining
phase factor and its risk starts at $O(\lambda)$.

The equality marginalizes the flag: even with ideal idle and $\alpha=0$,
$\beta=0$ and $\beta=\pi$ have identical recovered channels but different
flags. Noise-correlated $\alpha$ need not give one independent mixture weight.
Theorem~\ref{thm:task-calibration-cost} does not use this equality to discard
the flag. Its calibration instrument retains the resulting beta-dependent
within-flag interference and bounds the complete record directly.

\section{Full-record information and the task-cost bound}
\label{sec:matched-cost}
An unknown logical/reference input arrives after calibration; conditional on
the stable label, it is independent of the service noise. The transcript selects a
committed binary word; service syndromes drive only the fixed recovery.
Error is $\frac12\lVert\overline{\mathcal C}-\mathcal I\rVert_\diamond$,
the half-diamond distance of the calibration-averaged service channel from the
ideal idle, not a per-transcript confidence guarantee.

Each encoded query consists of exactly one fixed-duration noise/recovery cycle.
Its input is any logical/reference state through the fixed encoding, and
it returns the recovered logical output and all eight check bits.
Label-independent logical processing and re-encoding, classical action
choices, and classical stopping may depend on past records. An ideal
external reference or memory and the latent logical Pauli are granted at
no cost to strengthen the converse. Extra cycles count as extra queries.
The action interfaces are not coherently superposed, and sacrificial
out-of-code preparations define the separate product interface.

For each action,
\begin{equation}
\mathcal C_{\theta,a}(\rho)=(1-q_{\theta,a})\rho+q_{\theta,a}Z_L\rho Z_L,
\quad s_{\theta,a}=1-2q_{\theta,a}.
\label{eq:task-recovered-cycle}
\end{equation}
Let $R_{\theta,a}(L)=[1-s_{\theta,a}^L]/2$. Define
\begin{equation}
\begin{gathered}
b_+=R_{+,0},\quad g_+=R_{+,1},\quad
g_-=R_{-,0},\quad b_-=R_{-,1},\\
u=(b_+-D)/(b_+-g_+),\qquad v=(D-g_-)/(b_--g_-).
\end{gathered}
\label{eq:task-asymmetry}
\end{equation}
\begin{theorem}[Task-information cost bound]
\label{thm:task-calibration-cost}\hfill\\
Assume the recovered cycles in Eq.~\eqref{eq:task-recovered-cycle} have
positive dephasing contractions and opposing action preferences:
\(0<s_{+,0}<s_{+,1}<1\) and \(0<s_{-,1}<s_{-,0}<1\).
Assume a stable label \(\theta\in\{+,-\}\), fresh service noise conditional on the label,
independence between calibration and service, and conditionally independent
service cycles. A possibly adaptive calibration transcript \(Y\) may select
any distribution on length-\(L\) committed binary control words. If its
calibration-averaged service channel has half-diamond error at most
\(D\) in both models at every prefix, and \(0\le v<u\le1\), then
\begin{equation}
\begin{split}
 D_{\rm KL}(P_+^Y\Vert P_-^Y)&\ge {\rm kl}(u\Vert v),\\
 D_{\rm KL}(P_-^Y\Vert P_+^Y)&\ge {\rm kl}(v\Vert u),
\end{split}
 \label{eq:task-kl-requirement}
\end{equation}
where \({\rm kl}\) is binary relative entropy. Suppose further that, after
every adaptive history and for either next action, each charged calibration
query contributes at most \(k_+\) and \(k_-\) nats in the forward and reverse
KL directions. Every such calibration with hard query cap \(C\) obeys
\begin{equation}
 C\ge C_{\rm lb}(L,D):=
 \max\!\left\{\frac{{\rm kl}(u\Vert v)}{k_+},
               \frac{{\rm kl}(v\Vert u)}{k_-}\right\}.
 \label{eq:task-direct-cost}
\end{equation}
The subscript \({\rm lb}\) denotes a necessary lower bound on the query
count. Dependence on the specified noise family, pulse parameters, and
admitted query interface is suppressed.
For one service call, suppose a constructive physical-preparation protocol
satisfies the same all-prefix error bound and the service deadline. If its
\(n\) complete probes, including processing and cleanup, cost at most
\(nr\tau_c\), and each additional service
pulse has duration \(t_\pi=\zeta\tau_c\), its total cost and the encoded-query
necessary cost obey
\begin{align}
 \frac{B_{\rm prep}}{\tau_c}&\le nr+L(1+2\zeta),\nonumber\\
 \frac{B_{\rm encoded}}{\tau_c}&\ge C_{\rm lb}(L,D)+L.
 \label{eq:matched-total-cost}
\end{align}
\end{theorem}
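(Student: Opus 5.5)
The proof reduces the task guarantee to a binary hypothesis test on the calibration transcript and then applies data processing. I would start by writing the service channel induced by any transcript. Given $Y=y$, calibration selects a distribution over committed words $w=a_1\cdots a_L$. By Eq.~\eqref{eq:qec-composition} and conditional independence, the calibration-averaged service channel in model $\theta$ is a $Z_L$-dephasing channel whose survival is $\mathbb E_\theta\prod_j s_{\theta,a_j}$. Its half-diamond error is one half of one minus that survival, because a dephasing channel's half-diamond distance equals its flip weight, which a Bell input shows.

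The key reduction is the convexity argument from the blind-optimum proof. A word with $m$ action-1 cycles has survivals $s_{\theta,0}^{L-m}s_{\theta,1}^m$. Each is convex in $m$, so replacing the word by the endpoint mixture with weight $m/L$ on $1^L$ does not decrease either survival. Define the statistic $T(Y)\in[0,1]$ as the conditional expected fraction $m/L$ under the selection given $Y$, and let $t_\theta=\mathbb E_\theta T$. The error requirement at depth $L$ then forces two inequalities. In model $+$, $(1-t_+)b_++t_+g_+\le D$, i.e.\ $t_+\ge u$. In model $-$, $(1-t_-)g_-+t_-b_-\le D$, i.e.\ $t_-\le v$. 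Checking the all-prefix condition is unnecessary for the lower bound, since we use only the final prefix.

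Next I would apply data processing from $Y$ to a binary variable: draw $Z\sim\mathrm{Bernoulli}(T(Y))$ with label-independent randomness. Then $D_{\rm KL}(P_+^Y\Vert P_-^Y)\ge{\rm kl}(t_+\Vert t_-)$. Since $t_+\ge u>v\ge t_-$, monotonicity of ${\rm kl}(p\Vert q)$ for $p>q$ (increasing in $p$, decreasing in $q$) gives ${\rm kl}(u\Vert v)$. The reverse direction is symmetric. The boundary cases $u=1$ or $v=0$ give infinite or well-defined values and need only a remark.

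For the query count, I would pad any early stop with null symbols up to the hard cap $C$ and use the adaptive KL chain rule. Each conditional increment is at most $k_\pm$ because the per-query bound holds after every history and for either next action, and label-independent private coins contribute zero. This gives $Ck_\pm\ge$ the required KL, which is Eq.~\eqref{eq:task-direct-cost}.

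The cost inequalities are accounting. The product protocol spends at most $nr\tau_c$ on probes and $L$ cycles on service, plus at most two pulses of duration $\zeta\tau_c$ per cycle, which gives the first bound. The encoded route needs at least $C_{\rm lb}$ query cycles, each of duration at least $\tau_c$ with overheads waived, plus $L$ service cycles.

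The main obstacle is the convexity/endpoint step under adaptive and randomized selection. I must make sure $T(Y)$ depends only on the transcript, so that the same map serves both models. I must also confirm that averaging the survival over $Y$ and then applying the per-word endpoint inequality keeps the needed direction: survivals increase, so errors decrease, so the constraints on $t_\theta$ remain necessary. Both points hold because the selection rule is label-independent given $Y$.
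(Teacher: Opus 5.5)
Your proposal is correct and follows the paper's proof essentially step for step. The shared steps are the convexity chord that replaces each transcript's word distribution by the endpoint mixture with weight $\mathbb E[m/L\mid Y]$, the resulting constraints $t_+\ge u$ and $t_-\le v$, binary data processing through an auxiliary Bernoulli draw together with monotonicity of ${\rm kl}$, the padded adaptive chain rule with per-query caps $k_\pm$, and the same cost ledger. Your remark that only the terminal constraint is needed is consistent with the paper, which invokes the prefixes only to note that the endpoint replacement stays feasible, not as part of the necessary bound.
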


The finite working point is $\lambda=10^{-3}$, $p_z=10^{-4}$, and
$\mathcal P=[-0.1,0.1]\times[-\pi,\pi]$ for pulse angles
$(\alpha,\beta)$. Preparation, readout, extraction, and fixed recovery
are ideal. Each pulse is modeled as a rotation bracketing a fixed idle
channel. Its charged duration adds overhead but does not change that
idle channel. The angles are granted as known to the encoded comparator,
which strengthens its lower bound but does not enlarge the frozen physical
action menu; all statements are uniform over the box. The product sign rule
does not use $\lambda$ or these angles, although choosing $n$ and certifying
the task window use the specified model family and noise strength.
\label{app:task-cost-proof}
All logarithms are natural; initial calibration states and private
randomness are label independent.

\subsection{From program risk to asymmetric information}
At transcript $Y=y$, a word with $m$ action-1 cycles has survival
$S_\theta(m)=s_{\theta,0}^{L-m}s_{\theta,1}^m$. Convexity gives
\begin{equation}
 S_\theta(m)\le(1-m/L)s_{\theta,0}^{L}+(m/L)s_{\theta,1}^{L}.
 \label{eq:task-endpoint-chord}
\end{equation}
As in the blind optimum, replace the conditional word distribution by
$1^L$ with probability $w(y)=\mathbb E[m/L\mid Y=y]$ and $0^L$ otherwise.
This improves both terminal errors, including reference inputs. Each
endpoint-word error increases with depth, so the replacement also satisfies
every prefix whenever its terminal error is feasible.
With
$t_\theta=\mathbb E_\theta w(Y)$, the two risk constraints are
\begin{align}
 (1-t_+)b_+ +t_+g_+&\le D,\nonumber\\
 (1-t_-)g_- +t_-b_-&\le D,
 \label{eq:task-endpoint-risks}
\end{align}
and hence $t_+\ge u$ and $t_-\le v$.

Append an independent uniform random variable $U$ and define
$J=\boldsymbol 1\{U\le w(Y)\}$. Then $J$ is Bernoulli with parameter
$t_\theta$. Classical data processing gives
\begin{align}
 D_{\rm KL}(P_+^Y\Vert P_-^Y)
 &\ge {\rm kl}(t_+\Vert t_-),\nonumber\\
 D_{\rm KL}(P_-^Y\Vert P_+^Y)
 &\ge {\rm kl}(t_-\Vert t_+).
 \label{eq:task-data-processing}
\end{align}
For $x>y$, $\mathrm{kl}(x\Vert y)$ increases with $x$ and decreases with
$y$; the reverse divergence has the corresponding reversed monotonicities.
The minima over $t_+\ge u$, $t_-\le v$ occur at $(u,v)$, proving
Eq.~\eqref{eq:task-kl-requirement}, including endpoints by continuity.
If $u\le v$, a transcript-independent endpoint mixture may remove the positive
information requirement. If $u>1$ or $v<0$, a best constant action already
fails in one model and the chord proves infeasibility within this word class.
Degenerate equal-action rates are evaluated directly rather than through a
zero denominator.

Equation~\eqref{eq:task-endpoint-chord} is a risk domination only. A mixture
with small positive $w$ can introduce an all-pulse branch and need not reduce
worst-case pulse time. We do not use it as a service-cost optimum.

\subsection{The complete recovery record}
Write the two physical paths of the controlled action as
$W_1(\phi)=X_1U(\phi)X_1$ and $W_0(\phi)=U(\phi)$. For a four-bit phase syndrome
$s$, logical class $z\in\{0,1\}$, and independent appended-$Z_1$ fault branch
$e$, let $f_{1,sz,e}$ and $f_{0,sz,e}$ be the recovered logical-Pauli
amplitudes. The appended branch is present only for the controlled action;
the idle action has no pulse fault.

The even-weight stabilizers and odd-weight logical $Z_L$ give opposite
inversion parities to the two recovered classes. Both paths and their real
flag combinations preserve this distinction; an appended $Z_1$ shifts both
parities together. Hence the logical cross term is odd and averages to zero
separately for every syndrome, flag, and fault branch. The retained instrument is
\begin{equation}
 \mathcal I_{\theta,a}(\rho)
 =\sum_{s,b,z}P_{\theta,a}(s,b,z)
 |s,b\rangle\!\langle s,b|\otimes Z_L^z\rho Z_L^z.
 \label{eq:b2-recovered-instrument}
\end{equation}
Here $s$ gives the four $X$-check outcomes. Of the four $Z$ checks
$Z(54),Z(216),Z(9),Z(288)$, the corrected $X_1$ path flips only $Z(54)$.
Thus its flag $b$ is one of the eight stabilizer bits, while the other three
$Z$-check outcomes are deterministic; $b$ is not a ninth independent bit.

Equation~\eqref{eq:b2-recovered-instrument} holds for every encoded
logical/reference input, not only for a diagnostic state. Treating the latent
$z$ as observed makes each conditional logical output a known Pauli. Any retained
reference or quantum memory, logical preparation, processing, and readout can
then be simulated by a label-independent map after the classical draw
$(s,b,z)$. The latent bit and external memory are free relaxations of the converse;
neither is claimed as a component of the physical 17-qubit construction.

The retained flag mixes the coherent paths, so its information requires
the within-class Gram blocks.
For $j=(s,z)$, define the recovery-resolved two-path Gram block
\begin{equation}
 G_{\theta,j}=\sum_e w_e\,\mathbb E_\theta
 \begin{pmatrix}f_{1,j,e}\\f_{0,j,e}\end{pmatrix}
 \begin{pmatrix}f_{1,j,e}^*&f_{0,j,e}^*\end{pmatrix}
 =\begin{pmatrix}a_{\theta,j}&c_{\theta,j}\\
                  c_{\theta,j}&d_{\theta,j}\end{pmatrix},
 \label{eq:full-record-gram}
\end{equation}
where $w_1=p_z$ for the controlled action. Thus $a_{\theta,j}$ is the
conjugated-path weight (the $W_1=X_1UX_1$ entry $f_{1,j,e}$) and
$d_{\theta,j}$ the idle-path weight ($W_0=U$), and the label $\theta=+$ means
$v_+$ in Eq.~\eqref{eq:qec-covariance}. The blocks are real, positive
semidefinite. Exact interval arithmetic below verifies strict positivity of
all 64 model-resolved blocks at the finite working point, with determinant lower
endpoints greater than $7.72\times10^{-20}$.

Set
\begin{equation}
 D_\alpha=\operatorname{diag}(\cos(\alpha/2),\sin(\alpha/2)),
 \qquad \Gamma_{\theta,j}=D_\alpha G_{\theta,j}D_\alpha,
 \label{eq:full-record-gamma}
\end{equation}
and introduce the orthonormal flag vectors
\begin{align}
 u_0&=(\cos(\beta/2),-\sin(\beta/2))^{\mathsf T},\nonumber\\
 u_1&=(\sin(\beta/2), \phantom{-}\cos(\beta/2))^{\mathsf T}.
 \label{eq:full-record-flag-vectors}
\end{align}
Then the complete controlled-action probabilities are
\begin{equation}
 P_{\theta,1}(j,b)=u_b^{\mathsf T}\Gamma_{\theta,j}u_b.
 \label{eq:b2-kernel-probabilities}
\end{equation}
The same-class interference $c_{\theta,j}$ remains: $\beta$ rotates the flag
measurement basis. Hence the full instrument is generally $\beta$ dependent
even though the service channel after flag marginalization is not.

\subsection{A uniform SLD information bound}
The symmetric logarithmic derivative (SLD) measurement bound
\cite{braunstein1994statistical} controls every flag basis without a pulse grid.
\begin{lemma}[SLD measurement bound]
\label{lem:sld-block}
Let $\Gamma>0$, let $\Delta$ be Hermitian, and let the Hermitian $\Lambda$ solve
$(\Gamma \Lambda+\Lambda\Gamma)/2=\Delta$. For any complete positive measurement
$\{E_b\}$,
\begin{equation}
 \sum_b\frac{(\operatorname{Tr}E_b\Delta)^2}
                 {\operatorname{Tr}E_b\Gamma}
 \le \operatorname{Tr}(\Delta \Lambda).
 \label{eq:sld-measurement-bound}
\end{equation}
Terms with a common zero probability are understood by restriction to the
common support.
\end{lemma}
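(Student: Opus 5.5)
The plan is a Cauchy--Schwarz argument in the inner product induced by $\Gamma$. Since $\Gamma>0$, the SLD equation $(\Gamma\Lambda+\Lambda\Gamma)/2=\Delta$ has a unique Hermitian solution. It can be written explicitly in an eigenbasis of $\Gamma$: if $\Gamma=\sum_k\gamma_k|k\rangle\langle k|$, then $\Lambda_{kl}=2\Delta_{kl}/(\gamma_k+\gamma_l)$. Hermiticity follows because $\Delta$ is Hermitian and the denominators are real and positive. Substituting the SLD equation into each numerator gives
\[
\operatorname{Tr}E_b\Delta=\operatorname{Re}\operatorname{Tr}(E_b\Gamma\Lambda).
\]
The real part appears because $\operatorname{Tr}(E_b\Lambda\Gamma)=\overline{\operatorname{Tr}(E_b\Gamma\Lambda)}$ when $E_b,\Gamma,\Lambda$ are Hermitian.

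Next, for a fixed outcome $b$ with $\operatorname{Tr}E_b\Gamma>0$, I will write
\[
\operatorname{Tr}(E_b\Gamma\Lambda)=\operatorname{Tr}\bigl[(E_b^{1/2}\Gamma^{1/2})(\Gamma^{1/2}\Lambda E_b^{1/2})\bigr]
\]
and apply the Hilbert--Schmidt Cauchy--Schwarz inequality. This gives
\[
(\operatorname{Tr}E_b\Delta)^2\le|\operatorname{Tr}E_b\Gamma\Lambda|^2
\le\operatorname{Tr}(E_b\Gamma)\,\operatorname{Tr}(E_b\Lambda\Gamma\Lambda).
\]
Dividing by $\operatorname{Tr}E_b\Gamma$ bounds each term of the sum by $\operatorname{Tr}(E_b\Lambda\Gamma\Lambda)$. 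Completeness, $\sum_bE_b=I$, then bounds the whole sum by $\operatorname{Tr}(\Lambda\Gamma\Lambda)$. Cyclicity together with the SLD equation converts this to the stated right-hand side:
\[
\operatorname{Tr}(\Lambda\Gamma\Lambda)=\tfrac12\operatorname{Tr}\bigl[\Lambda(\Gamma\Lambda+\Lambda\Gamma)\bigr]=\operatorname{Tr}(\Lambda\Delta).
\]

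The main step needing care is the zero-probability convention. If $\operatorname{Tr}E_b\Gamma=0$ and $\Gamma>0$, then $E_b=0$, because $\operatorname{Tr}E_b\Gamma\ge\gamma_{\min}\operatorname{Tr}E_b$. Hence $\operatorname{Tr}E_b\Delta=0$ as well, and such terms can simply be dropped. This matches the stated ``restriction to the common support.'' In the application, $\Gamma_{\theta,j}$ is a strictly positive $2\times2$ block, as certified by the determinant lower endpoints. There $\Delta$ will be the model difference $\Gamma_{+,j}-\Gamma_{-,j}$, or a suitably normalized version of it, and $E_b=u_bu_b^{\mathsf T}$ are the rank-one flag projectors. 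The bound holds for every measurement, so it is uniform in $\beta$, and no grid over pulse angles is needed. I expect the lemma itself to pose no real obstacle. The later difficulty will be converting this $\chi^2$-type quantity into a KL bound with the correct $\lambda$ order, which is outside the present statement.
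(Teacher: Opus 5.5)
Your proposal is correct and follows the paper's proof essentially step for step: you use the real-part Hilbert--Schmidt Cauchy--Schwarz inequality $(\operatorname{Tr}E_b\Delta)^2\le(\operatorname{Tr}E_b\Gamma)(\operatorname{Tr}E_b\Lambda\Gamma\Lambda)$ for each effect, divide and sum using completeness, and identify $\operatorname{Tr}(\Gamma\Lambda^2)=\operatorname{Tr}(\Delta\Lambda)$ from the SLD equation. Your explicit factorization through $E_b^{1/2}\Gamma^{1/2}$, and your remark that $\Gamma>0$ forces $E_b=0$ whenever $\operatorname{Tr}E_b\Gamma=0$, only spell out details the paper leaves implicit; the paper's common-support convention is mainly needed later at the singular $\alpha=0$ endpoint, which the lemma's hypothesis $\Gamma>0$ excludes.
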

\noindent\emph{Proof.}
For one effect, the real-part Hilbert--Schmidt Cauchy--Schwarz inequality gives
\begin{equation}
 (\operatorname{Tr}E\Delta)^2
 =(\operatorname{Re}\operatorname{Tr}E\Gamma \Lambda)^2
 \le(\operatorname{Tr}E\Gamma)
      (\operatorname{Tr}E \Lambda\Gamma \Lambda).
 \label{eq:sld-cauchy-schwarz}
\end{equation}
Divide and sum over $E_b$. Completeness yields
$\operatorname{Tr}(\Gamma \Lambda^2)=\operatorname{Tr}(\Delta \Lambda)$, proving the
claim.

Kraus completeness gives
$\sum_j a_{\theta,j}=\sum_jd_{\theta,j}=1$, and the two flag projectors
sum to the identity. Thus the direct sum of the block measurements is a
normalized probability law. Apply the lemma to each unnormalized block with
$\Gamma=\Gamma_{-,j}$ and
$\Delta_j=\Gamma_{+,j}-\Gamma_{-,j}$. Their direct sum is normalized, and
$D_{\rm KL}(P\Vert Q)\le\chi^2(P\Vert Q)$, so
\begin{equation}
 D_{\rm KL}(P_{+,1}\Vert P_{-,1})
 \le\sum_j\operatorname{Tr}(\Delta_j\Lambda_j).
 \label{eq:full-record-sld-sum}
\end{equation}
The reverse direction exchanges the labels. This covers every flag basis and
therefore every $\beta\in[-\pi,\pi]$.

The forward denominator is the minus-label block; the reverse exchanges
labels. Because conjugation swaps the dominant idle paths, the two caps
differ. The leading entries $(a_{-,j},d_{-,j})$ are displayed in
Sec.~\ref{app:weak-noise-proof}.

For a directly reproducible scalar expression, write
\begin{equation}
 G_-=\begin{pmatrix}a&c\\c&d\end{pmatrix},\qquad
 G_+-G_-=\begin{pmatrix}A_g&C_g\\C_g&D_g\end{pmatrix},
 \label{eq:sld-block-entries}
\end{equation}
and set $x=\sin^2(\alpha/2)$. Solving the two-dimensional SLD equation gives
\begin{align}
 H&=aD_g+d A_g-2cC_g,\nonumber\\
 N(x)&=[(1-x)A_g-xD_g]^2+4x(1-x)C_g^2\nonumber\\
 &\quad+\frac{x(1-x)H^2}{ad-c^2},\nonumber\\
 Q(x)&=\frac{N(x)}{(1-x)a+xd}.
 \label{eq:sld-closed-form}
\end{align}
The sign of $\alpha$ drops out. At $\alpha=0$, the labels have common
rank-one path support and $Q(0)=A_g^2/a$; restriction to that support, or
continuity, proves the endpoint. A flag with zero probability under both
labels contributes zero.

On the declared interval,
$0\le x\le\sin^2(0.05)<2.5\times10^{-3}$. Outward rational interval evaluation of
Eq.~\eqref{eq:sld-closed-form}, with
$x(1-x)\le x_{\max}(1-x_{\max})$, gives controlled-action bounds below
$1.30\times10^{-6}$ forward and $3.03\times10^{-6}$ reverse.
The idle categorical bounds are below $1.18\times10^{-6}$ and
$8.71\times10^{-7}$. Thus, for both actions and the full pulse box,
\begin{align}
 \max_a D_{\rm KL}(P_{+,a}\Vert P_{-,a})
       &\le k_+=1.30\times10^{-6},\nonumber\\
 \max_a D_{\rm KL}(P_{-,a}\Vert P_{+,a})
       &\le k_-=3.03\times10^{-6}.
 \label{eq:b2-per-cycle-kl}
\end{align}
For comparison, the ideal idle record with the granted logical class obeys
\begin{equation}
 D_{\rm KL}(P_{+,0}\Vert P_{-,0})
 =1.18609\ldots\,\lambda^2+O(\lambda^3)
 \label{eq:idle-kl-asymptotic-versus-finite}
\end{equation}
as $\lambda\downarrow0$, whereas its exact value at $\lambda=10^{-3}$ is
$1.1778907071354975\ldots\times10^{-6}$, certified below
$1.177890707135498\times10^{-6}<1.18\times10^{-6}$.
The asymptotic coefficient is not a finite-noise cap, nor the syndrome-only
coefficient $A_0$ in Eq.~\eqref{eq:patch-matched-coefficient}.

The certificate constructs the 32-dimensional encoded orbit kernels from
the code masks. With $e_0(z)=e(z)$ and $e_1(z)$ reversing its component 1,
cross kernels use exponent
\begin{equation}
 \frac18[e_1(z)-e_0(z')]^{\mathsf T}
 K_\theta[e_1(z)-e_0(z')].
 \label{eq:full-record-cross-kernel}
\end{equation}
Grouped integer coefficients multiply $e^{-n/8000}$. Alternating rational
Taylor polynomials of degrees 20 and 21 enclose each exponential, after which
all arithmetic is rounded outward on denominator $10^{80}$. The calculation
checks both path normalizations, all Gram determinants, and idle KL without
clipping rare categories to zero.

\subsection{Adaptive calibration and the uniform cost bound}
Expand the history to $\widehat Y$ by granting every latent Pauli $z$.
At fixed $\widehat Y$, the simulated logical/reference state and all
subsequent label-independent processing have the same law in both models,
including delayed measurements. Only the next draw $(s,b,z)$ depends on
the label, with KL bounded by $k_+,k_-$. Classical action/stopping kernels
are also label independent at that history. Pad early stops to the hard
cap $C$. The chain rule and discarding the grants give
\begin{align}
 D_{\rm KL}(P_+^Y\Vert P_-^Y)&\le Ck_+,\nonumber\\
 D_{\rm KL}(P_-^Y\Vert P_+^Y)&\le Ck_-.
 \label{eq:adaptive-kl-cap}
\end{align}
This is the standard fixed-interface change-of-measure step behind the task
lower bound \cite{kaufmann2016complexity}; the present instrument-specific
reduction supplies its per-query KL cap.
Combining this with Eq.~\eqref{eq:task-kl-requirement} proves
Eq.~\eqref{eq:task-direct-cost}. Longer calibration experiments require
additional queries; arbitrary-duration evolution or coherent superpositions
of the two physical action interfaces are not admitted.

For an $N$-cycle operating record from the same interface, the chain rule
and Pinsker's inequality also give
\begin{equation}
 \begin{aligned}
 D_{\rm KL}(P_+^{Y_N}\Vert P_-^{Y_N})&\le Nk_+,\\
 p_{\rm err}^{\rm eq}&\ge
 \frac12\left[1-\sqrt{Nk_+/2}\right]_+,
 \end{aligned}
 \label{eq:service-observation-pinsker}
\end{equation}
where $p_{\rm err}^{\rm eq}=(1-\operatorname{TV})/2$ is the optimal
equal-prior label-identification error and $[x]_+=\max\{x,0\}$.  For
$N=L=4600$, $Nk_+\le0.00598$ and the last bound exceeds $0.4726$.
This bounds label learning, not all service-time feedback, which would
require a joint analysis of acquired information and evolving task error.

We now establish the endpoint directions and the numerical constants used in
the task bound. Service
record marginalization preserves Proposition~\ref{prop:pulse}. The idle rates
and enclosing controlled-rate intervals over the full box are
\begin{align}
 q_{+,0}&\in[2.803115,2.803117]\times10^{-6},\nonumber\\
 q_{+,1}&\in[2.15752,2.15939]\times10^{-6},\nonumber\\
 q_{-,0}&\in[2.057850,2.057851]\times10^{-6},\nonumber\\
 q_{-,1}&\in[2.90078,2.90265]\times10^{-6}.
 \label{eq:full-box-service-rates}
\end{align}
The exact rational intervals, rather than these displayed decimals, determine
the inequalities. Since $R(q,L)$ increases with $q$, lower rate endpoints
relax the direct task constraints. For $b>D>g$, both partial derivatives of
$u=(b-D)/(b-g)$ are positive, whereas both derivatives of
$v=(D-g)/(b-g)$ are negative. Substitution of all four lower endpoints hence
produces $u_*\le u(\alpha)$ and $v_*\ge v(\alpha)$, even when no single
$\alpha$ attains every endpoint. At $L=4600$ and $D=0.01$,
\begin{align}
  {\rm kl}(u_*\Vert v_*)&>1.48,\nonumber\\
  {\rm kl}(v_*\Vert u_*)&>1.91.
 \label{eq:b2-numeric-kl-bounds}
\end{align}
Upper rate endpoints, by contrast, bound the achieved product-protocol risk in
Eq.~\eqref{eq:robust-selector-risks}.

Using the rounded per-query caps favors the encoded-record comparator yet
still gives
\begin{align}
 C&\ge\max\{1.48/k_+,1.91/k_-\}
      >1.138\times10^6,\nonumber\\
 B_{\rm encoded}&\ge(C+4600)\tau_c
      >1.14\times10^6\tau_c.
 \label{eq:full-record-cost-proof}
\end{align}
The first KL direction alone already proves the displayed lower bound. The
converse charges only one $\tau_c$ per query and waives its preparation,
readout, pulse, reference, and memory costs. This strengthens the comparator.

The constructive probe cost and selector proof are given next.
Its worst service branch adds at most two pulses per cycle, completing
the cost account in Eq.~\eqref{eq:matched-total-cost}.

\section{Product probe, decision statistics, and complete cost}
\label{sec:product-preparation}
\label{app:preparation-proof}
\subsection{The same extractor on a product input}
Reset the nine data qubits, apply Hadamards to qubits 1 and 2, then run the
same idle and eight-check extractor without preliminary code projection.
The outputs $m_X,m_Z\in\{-1,+1\}$ correspond to checks
$X(6)=X_1X_2$ and $Z(54)=Z_1Z_2Z_4Z_5$. Define
\begin{equation}
 c=\frac{e^{-\lambda}+e^{-3\lambda}}2,
 \qquad d=\frac{e^{-\lambda}-e^{-3\lambda}}2 .
 \label{eq:product-cd}
\end{equation}
Their joint law is
\begin{equation}
 \begin{gathered}
 P_\theta(m_X=a,m_Z=b)=\frac{1+ac-\theta abd}{4},
 \\ a,b\in\{-1,+1\}.
 \end{gathered}
 \label{eq:product-two-check-law}
\end{equation}
Here $m_Z$ identifies the initially occupied check sector; on encoded inputs
the same check supplies the corrected-$X_1$ flag. Both individual marginals
are label independent, but $\langle m_Xm_Z\rangle_\theta=-\theta d$.

\begin{lemma}[Complete extractor]
\label{lem:arbitrary-input-extractor}
For any data/reference state, ideal ancilla extraction of a check
$Z(S)$ has the outcome-$b$ Kraus operator
\begin{equation}
 M_b^{Z(S)}=\frac{I+bZ(S)}2,\qquad b\in\{-1,+1\},
 \label{eq:z-check-kraus}
\end{equation}
and the ideal gadget for an $X$-mask check $X(S)$ has
\begin{equation}
 M_a^{X(S)}=\frac{I+aX(S)}2,\qquad a\in\{-1,+1\}.
 \label{eq:x-check-kraus}
\end{equation}
Consequently the eight commuting gadgets, executed in the declared order,
have joint Kraus operator
\begin{equation}
 M_{\boldsymbol o}=\prod_{i=1}^{8}\frac{I+o_iS_i}{2}
 \label{eq:joint-check-projector}
\end{equation}
for every physical data/reference input.
\end{lemma}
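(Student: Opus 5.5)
The plan is to analyze one check gadget at the level of circuits, and then compose the gadgets using commutativity. For a $Z(S)$ check, the ancilla starts in $|0\rangle$ and receives one CNOT from each data qubit in $S$ (data qubit as control, ancilla as target), then is measured in the computational basis. Each CNOT acts as $|0\rangle\langle0|_j\otimes I+|1\rangle\langle1|_j\otimes X_{\rm anc}$. Their product is therefore controlled on the parity $\bigoplus_{j\in S}z_j$ of the data computational basis, which is the eigenvalue of $Z(S)$. Projecting the ancilla onto outcome $m\in\{0,1\}$ thus leaves the data operator $\sum_{z:\,\text{parity}=m}|z\rangle\langle z|=(I+bZ(S))/2$ with $b=(-1)^m$. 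This step uses only the data computational basis, so it holds for an arbitrary data state tensored with any reference: the reference is untouched, and the Kraus operator is $M_b\otimes I_{\rm ref}$.

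For an $X(S)$ check, I will use the declared ordering with Hadamards on the ancilla. The ancilla is prepared in $|+\rangle$ and controls CNOTs targeting each data qubit in $S$, which applies $X(S)$ when the ancilla is $|1\rangle$. It is then returned by a Hadamard and measured. Writing $|+\rangle=(|0\rangle+|1\rangle)/\sqrt2$, the joint operator is $(|0\rangle\otimes I+|1\rangle\otimes X(S))/\sqrt2$. After the Hadamard and projection onto $\langle m|$, this gives $(I+(-1)^mX(S))/2$. The same conclusion follows by conjugating the $Z$ gadget by Hadamards on the data qubits in $S$. Ancilla resets between gadgets only re-prepare fixed states, so the ancillas carry no memory of earlier outcomes.

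For the composition, measuring the gadgets in the declared order gives the data Kraus operator $M_{o_8}^{(8)}\cdots M_{o_1}^{(1)}$. The eight stabilizers are Pauli operators with pairwise even overlap between the $X$ and $Z$ masks. I will check this directly from the masks $27,432,192,6$ against $54,216,9,288$, for example $27\wedge54=\{1,2\}$ and $6\wedge54=\{1,2\}$, and so on. Hence the $S_i$ commute, the projectors $(I+o_iS_i)/2$ commute, and the ordered product equals the unordered product in Eq.~\eqref{eq:joint-check-projector}.

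The main obstacle is the gate-ordering subtlety. Within a single gadget, the CNOTs of different ancillas may interleave in the real 24-CNOT schedule, and one must rule out hook-type cross-talk. Because the extraction here is ideal, I would handle this by noting that all CNOTs of different gadgets act on disjoint ancillas. A data qubit serves as control for $Z$-type CNOTs and as target for $X$-type CNOTs. A $Z$-ancilla CNOT and an $X$-ancilla CNOT sharing data qubit $j$ do not commute individually. However, the full unitary of each gadget is $\Lambda(Z(S))$ or $\Lambda(X(S))$ in the ancilla-controlled picture, and these controlled unitaries commute because the $S_i$ commute. I therefore plan to verify that the declared interleaving is equivalent to sequential gadget execution, which follows from the even-overlap property applied pairwise to the controlled Paulis. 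If that equivalence is not already guaranteed by the declared ordering, I would fall back on treating the gadgets as executed sequentially, which is how the lemma is stated.
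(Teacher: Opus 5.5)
Your proof is correct and follows the paper's own argument: the computational-basis parity projection gives the $Z$ gadget, Hadamards on the ancilla around ancilla-controlled CNOTs give the $X$ gadget, and commutation of the eight projectors turns the ordered product into Eq.~\eqref{eq:joint-check-projector}. Two minor points: first, $27\wedge54=\{1,4\}$, not $\{1,2\}$, though the overlap is still even. Second, the interleaving digression is unnecessary, since the lemma and the paper take the gadgets sequentially in the declared check order; moreover, even overlap alone would not justify an arbitrary CNOT interleaving, which additionally requires each $X$/$Z$ ancilla pair to visit its two shared qubits in the same relative order.
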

\noindent\emph{Proof.}
The data-to-ancilla CNOT parity gadget gives
$M_b^{Z(S)}=(I+bZ(S))/2$ on each computational basis state, hence on arbitrary
data/reference inputs. For an $X$ check, Hadamards before and after the
ancilla-controlled CNOTs give $(I+aX(S))/2$. The commuting projectors multiply
to the joint Kraus operator. The masks have weights $4,4,2,2$ for each check
type, requiring 24 CNOTs and eight Hadamards as charged in the base cycle.
Recovery follows readout and cannot change its law. \hfill$\square$

For the product input $|+\rangle_1|+\rangle_2|0\rangle^{\otimes7}$,
the conditional moments are
\begin{align}
 \mathbb E[m_X\mid\phi]&=\cos\phi_1\cos\phi_2,\nonumber\\
 \mathbb E[m_Z\mid\phi]&=0,\nonumber\\
 \mathbb E[m_X m_Z\mid\phi]&=-\sin\phi_1\sin\phi_2,
 \label{eq:product-conditional-moments}
\end{align}
The product of the measured operators is $S_XS_Z=-Y_1Y_2Z_4Z_5$.
The Gaussian characteristic function and
$K_{12}=\theta\lambda$ yield
$\mathbb E_\theta m_X=c$, $\mathbb E_\theta m_Z=0$, and
$\mathbb E_\theta m_Xm_Z=-\theta d$,
with $c,d$ from Eq.~\eqref{eq:product-cd}. Expanding the two commuting
projectors proves Eq.~\eqref{eq:product-two-check-law}.

For the full eight-bit law, expand the commuting-check projector into
its 256 Pauli terms. On product-state support $\{0,2,4,6\}$ only $X$ masks
0 and 6 have nonzero expectations, and the spectator $Z$ checks act
trivially. The remaining three $X$ outcomes are independent fair signs,
independent of $(m_X,m_Z)$; the other three $Z$ outcomes are $+1$.
Thus the full law is $P_\theta(m_X,m_Z)/8$ on its support.
The preparation is unentangled, but the unchanged extraction circuit is
an entangling measurement.

\subsection{Exact selector and service risk}
For $n$ independent probes, use
\begin{equation}
 T=N(m_X=-1,m_Z=+1)-N(m_X=-1,m_Z=-1).
 \label{eq:product-statistic}
\end{equation}
Select conjugation throughout service when $T\ge0$, and idle otherwise.
Write $w_\theta(n)=\Pr_\theta(T\ge0)$; the asymmetric tie rule accounts
for the different wrong-action losses.
Let $M=N(m_X=-1)$ among $n$ independent product probes and set
\begin{equation}
 q=\frac{1-c}{2},\qquad
 t_\theta=\frac{1-c+\theta d}{2(1-c)}.
 \label{eq:product-conditional-binomial}
\end{equation}
Then $M\sim\operatorname{Bin}(n,q)$, and conditional on $M=m$, the count
$N(m_X=-1,m_Z=+1)$ is $\operatorname{Bin}(m,t_\theta)$. Since $T\ge0$ exactly when
this conditional count is at least $\lceil m/2\rceil$, the controlled-action
probability is
\begin{align}
 w_\theta={}&\sum_{m=0}^{n}\binom nm q^m(1-q)^{n-m}\nonumber\\
 &\times\sum_{k=\lceil m/2\rceil}^{m}\binom mk
 t_\theta^k(1-t_\theta)^{m-k}.
 \label{eq:product-selector-tail}
\end{align}
At $\lambda=10^{-3}$ and $n=10000$, 70-digit outward fixed-point arithmetic
encloses every positive term through $m=120$. Markov's
inequality applied to $\binom{M}{121}$ bounds the omitted mass by
$\binom{n}{121}q^{121}$, evaluated at the outward rational upper endpoint of
the transcendental $q$. Alternating rational series give neighboring bounds on
$e^{-\lambda}$ and $e^{-3\lambda}$. The final intervals
have width below $10^{-56}$ and give $w_+\simeq0.965$ and
$w_-\simeq0.0717$. No Gaussian, Poisson, or
simulation approximation is used.

Conditional independence of calibration and service gives the
once-per-program channel mixture
\begin{equation}
\begin{split}
 R_{{\rm sel},\theta}(L;n)
 ={}&w_\theta(n)R_{\theta,1}(L)\\
 &+[1-w_\theta(n)]R_{\theta,0}(L).
\end{split}
 \label{eq:robust-selector-risks}
\end{equation}
Each positive dephasing contraction makes this error monotone in depth,
so a terminal bound protects every prefix and every input/reference.
It is an average over calibration outcomes, not a per-selected-word guarantee.
Define the uniform margin
\begin{equation}
 \Delta(L,n;D):=
 D-\sup_{\substack{\theta\in\{+,-\}\\(\alpha,\beta)\in\mathcal P}}
 R_{{\rm sel},\theta}(L;n).
 \label{eq:headline-margin}
\end{equation}
with pulse dependence implicit. Substitution of the outward service-rate
endpoints proves the finite guarantees below.

\subsection{Charged implementation and the worked task}
The unknown encoded input arrives at service start; its common encoding
cost is outside both accounts. The base $\tau_c$ includes the fixed idle,
24 extraction CNOTs, eight extraction Hadamards, eight ancilla
measurements/resets, and fixed recovery. A serial product-probe allowance is
\begin{equation}
 r_{\rm product}:=1+9c_R+2c_H+c_{\rm aux}.
 \label{eq:product-probe-ledger}
\end{equation}
where $c_R,c_H$ charge one data reset and one preparation Hadamard,
and $c_{\rm aux}$ covers eight-bit storage, count updates, final decision
and handoff, initial ancilla preparation, and final probe cleanup.
For $\zeta=t_\pi/\tau_c$, Eq.~\eqref{eq:matched-total-cost} gives
$\overline B_{\rm product}=\tau_c[nr_{\rm product}+L(1+2\zeta)]$
and $\underline B_{\rm encoded}=\tau_c[C_{\rm lb}(L,D)+L]$.
Once the selector meets the error target and
$L(1+2\zeta)\tau_c\le\tau_{\rm srv}$, any budget satisfying
$\overline B_{\rm product}\le B<\underline B_{\rm encoded}$ separates
the two interfaces. This sufficient interval is nonempty when
\begin{equation}
 nr_{\rm product}+2L\zeta<C_{\rm lb}(L,D).
 \label{eq:separation-condition}
\end{equation}
The common service cost cancels, leaving acquisition and extra pulse costs.

For $\lambda=10^{-3}$, $L=4600$, $D=0.01$, $n=10000$,
$r_{\rm product}\le20$, $\zeta=0.02$, and deadline $5000\tau_c$,
the exact selector and uniform rate enclosures give
$R_{{\rm sel},+}<9.94\times10^{-3}$,
$R_{{\rm sel},-}<9.65\times10^{-3}$, and $\Delta>6.2\times10^{-5}$.
The product cost bound is $204784\tau_c<2.05\times10^5\tau_c$,
including service allowance $4784\tau_c$.
The encoded necessary cost exceeds $1.14\times10^6\tau_c$.
Thus budget $2.5\times10^5\tau_c$ separates them.
For this fixed budget and $n=10000$, the same ledger allows
$r_{\rm product}\le(250000-4784)/10000=24.5216$. A bound near $113$ instead
answers whether some larger separating budget exists below the encoded lower
bound; it is not the margin at the displayed budget. Here $r_{\rm product}$
is dimensionless and a probe duration is $r_{\rm product}\tau_c$.
These numbers evaluate the symbolic cost model; they are not independent
measurements of physical time scales.

\section{Weak-noise information order and charged task separation}
Vary the known strength $\lambda$ while keeping the
code, covariance shape, recovery, and pulse domain fixed. Take the controlled
fault probability to be $p_z=\lambda/10$, a path through the finite working point.
The following result uses ideal delivered records and establishes the
separation for sufficiently small $\lambda$. The worked example above has
its own finite-$\lambda$ certificate.

\begin{theorem}[Preparation-dependent task cost]
\label{thm:information-order}\hfill\\
For the above weak-noise family with ideal delivered records, uniformly over
$|\alpha|\le0.1$ and $|\beta|\le\pi$, every complete encoded query, including
the latent logical-$Z$ grant, obeys
\begin{equation}
 \max_{\theta,a}D_{\rm KL}(P_{\theta,a}^{\rm enc}\Vert
                       P_{-\theta,a}^{\rm enc})\le K_{\rm enc}\lambda^2
 \label{eq:encoded-weak-cap}
\end{equation}
for all sufficiently small $\lambda>0$, with a uniform constant
$K_{\rm enc}$; one may take $K_{\rm enc}=4$ on this domain.
The complete product record has
\begin{equation}
 D_{\rm KL}(P_+^{\rm prod}\Vert P_-^{\rm prod})
 =\tfrac12\log(3)\lambda+O(\lambda^2),
 \label{eq:product-weak-kl}
\end{equation}
with the same reverse divergence. At $D=0.01$, there are positive constants
$\ell_0,a_0,b_0$ for which the task family
\begin{equation}
 L_\lambda=\left\lfloor\frac{\ell_0}{\lambda^2}\right\rfloor,
 \quad n_\lambda=\left\lceil\frac{a_0}{\lambda}\right\rceil,
 \quad B_\lambda=\frac{b_0\tau_c}{\lambda^2}
 \label{eq:weak-task-family}
\end{equation}
has a strict total-cost separation. With complete product-probe allowance
$r_{\rm product}\le20$, pulse ratio $\zeta=0.02$, and service deadline
$\tau_{\rm srv}\ge L_\lambda(1+2\zeta)\tau_c$, the same product selector
gives a calibration-averaged, worst-input half-diamond guarantee at every
storage prefix within $B_\lambda$ for all sufficiently small positive
$\lambda$. Every fixed-encoding calibration
protocol in Theorem~\ref{thm:task-calibration-cost} is excluded by that same
hard budget.
\end{theorem}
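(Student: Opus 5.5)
The plan has three parts, taken in the order of the statement: the encoded cap \eqref{eq:encoded-weak-cap}, the product divergence \eqref{eq:product-weak-kl}, and then the task family, assembled through Theorem~\ref{thm:task-calibration-cost}.

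\textbf{Encoded cap.} I would reuse the full-record reduction of Sec.~\ref{sec:matched-cost}. Granting the latent logical Pauli, each query reduces to the classical draw $(s,b,z)$. For the controlled action this draw has probabilities $u_b^{\mathsf T}\Gamma_{\theta,j}u_b$; for the idle action it is categorical. Lemma~\ref{lem:sld-block} bounds the controlled-action KL by $\sum_j\operatorname{Tr}(\Delta_j\Lambda_j)$ uniformly in $\beta$, and the closed form \eqref{eq:sld-closed-form} shows the bound is even in $\alpha$.

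The key step is a weak-noise expansion of the Gram blocks $G_{\theta,j}$ with $p_z=\lambda/10$. For each class $j$ one needs three facts:
\begin{itemize}
\item the model difference $G_+-G_-$ is smaller by a factor $O(\lambda^2)$ relative to the leading entries $a_{-,j},d_{-,j}$ when these are $\Theta(\lambda^{w})$ with $w\le 1$;
\item it is $O(\lambda^{w_j})$ in sectors of minimum weight $w_j\ge2$;
\item the determinant $ad-c^2$ stays comparable to $(a+d)^2$ whenever the $H^2$ term matters.
\end{itemize}
Matching of the single-error rates $\omega$ (Eq.~\eqref{eq:leading}, verified for $K_\pm$ and after $X_1$ conjugation) gives the first fact. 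The appended $Z_1$ fault enters with weight $p_z=O(\lambda)$ and contributes only $O(\lambda)\times$ quantities that are themselves matched at leading order.

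Then $\chi^2$-type terms are $O(\lambda^4)/\Theta(\lambda)$ in single-error sectors and $O(\lambda^{w})$ in rarer sectors, giving $O(\lambda^2)$ overall. To obtain the explicit constant $K_{\rm enc}=4$, I would compute the $\lambda^2$ coefficient of $Q(x)$ at $x_{\max}=\sin^2(0.05)$ blockwise. I would cross-check it against the idle coefficient $1.186$ in \eqref{eq:idle-kl-asymptotic-versus-finite} and the finite-point caps $1.30$ and $3.03$, all below $4$, and then leave margin for sufficiently small $\lambda$.

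The main obstacle is uniformity of the $H^2/(ad-c^2)$ term when a block is nearly rank one. I would handle it by showing that $H$ vanishes to the same relative order as the determinant, i.e.\ that the difference lies in the common path support at leading order. Continuity then gives $Q\to A_g^2/a$, as at $\alpha=0$.

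\textbf{Product divergence.} Expand the exact law \eqref{eq:product-two-check-law}. The $m_X=+1$ cells differ by $O(\lambda^2)$ and contribute $O(\lambda^3)$ to the KL. The $m_X=-1$ cells have probabilities $\tfrac{\lambda}{4}(2\mp\theta)+O(\lambda^2)$, i.e.\ $3\lambda/4$ and $\lambda/4$ swapped between models. Their contribution is
\[
\tfrac{3\lambda}{4}\log 3+\tfrac{\lambda}{4}\log\tfrac13=\tfrac12\lambda\log 3 .
\]
The same computation gives the reverse direction by symmetry.

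\textbf{Task family.} Put $L=\ell_0/\lambda^2$. Since $q_{\theta,a}=\kappa_{\theta,a}\lambda^2+O(\lambda^3)$, with the pulse and $p_z$ corrections entering the $\kappa$ through Proposition~\ref{prop:pulse} and \eqref{eq:pulse-risk}, the risks converge to $R_{\theta,a}\to\tfrac12(1-e^{-2\kappa_{\theta,a}\ell_0})$.

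First choose $\ell_0$ so that, at $D=0.01$ and uniformly over the pulse box, the limiting oracle risk is strictly below $D$ and the limiting $u>v$ with $0\le v<u\le1$. I would take $\ell_0$ near $4600\times10^{-6}$, so that the finite working-point inequalities persist in the limit.

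Next, the selector $T$ has $w_+\to1$ and $w_-\to0$ as $a_0\to\infty$. This follows because $M\sim\mathrm{Bin}(n,\Theta(\lambda))$ has mean about $a_0$ and the conditional biases $t_\pm$ are fixed away from $1/2$, so Chernoff bounds apply. Hence $R_{\rm sel}<D$ for a large fixed $a_0$ and all small $\lambda$.

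Finally compare costs. The product cost is $\tau_c[20a_0/\lambda+O(1)+1.04\,\ell_0/\lambda^2]$. The encoded cost is at least $\tau_c[{\rm kl}(u\Vert v)/(4\lambda^2)+\ell_0/\lambda^2]$ by Theorem~\ref{thm:task-calibration-cost}. Choosing
\[
b_0\in\bigl(1.04\,\ell_0,\ \ell_0+{\rm kl}(u\Vert v)/4\bigr)
\]
separates the two for small $\lambda$, because the product acquisition term is only $O(\lambda^{-1})$. This requires $0.04\,\ell_0<{\rm kl}/4$. That condition holds, since ${\rm kl}(u\Vert v)>1.48$ at the working point is much larger than $\ell_0\sim5\times10^{-3}$.
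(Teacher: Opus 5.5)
Your product-divergence calculation and your assembly of the task family follow the paper's route. That means a sign test with a Chernoff bound and $n_\lambda\propto\lambda^{-1}$, $\ell_0\approx0.0046$, limiting risks $\tfrac12(1-e^{-2\ell_0\kappa})$, and the window $(1+2\zeta)\ell_0<b_0<\ell_0+I_0/K_{\rm enc}$. The gap is in the encoded cap, at exactly the obstacle you flag, and the fixes you propose do not close it.

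To get $\sum_j\operatorname{Tr}(\Delta_j\Lambda_j)=O(\lambda^2)$ uniformly in $\alpha$ you must control the term $x(1-x)H^2/[(ad-c^2)((1-x)a+xd)]$. Consider the single-error syndromes not touching qubit 1. There both paths have the same leading amplitude, so $a,c,d$ share a leading value $\Theta(\lambda)$, and $ad-c^2=\Theta(\lambda^4)$ is not comparable to $(a+d)^2$. In three of these blocks the model difference is $\Theta(\lambda^2)$, only one order below the entries, so your first ``fact'' is misstated there. The generic estimate $H=O(\lambda^3)$ then makes the $H^2$ term $O(\lambda)$, which would destroy the $\lambda^2$ cap. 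You need an extra order of vanishing of $H$, and your plan does not supply it. Matching of $\omega$ only fixes leading diagonal entries. Your third ``fact'' presupposes the needed bound. Continuity toward $Q(0)=A_g^2/a$ concerns $x\to0$ at fixed $\lambda$, not the degeneration of the block as $\lambda\to0$ at fixed $x>0$.

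The paper closes this with an exact symmetry you do not use. Flipping $\theta$ reverses $\phi_1$, which exchanges the conjugated and idle paths in every fault branch while leaving the real cross kernel invariant. Hence $a_{+,j}=d_{-,j}$, $d_{+,j}=a_{-,j}$, and $c_{+,j}=c_{-,j}$. The block difference is then $\operatorname{diag}(\delta,-\delta)$ with $\delta=d-a$, so $C_g=0$ and $H=\delta^2$ exactly. Consequently $Q_j(x)=[\delta^2+x(1-x)\delta^4/(ad-c^2)]/[(1-x)a+xd]$ is bounded by the angle-independent envelope $\delta^2/\min(a,d)+\delta^4/[400(ad-c^2)\min(a,d)]$, which is the same in both KL directions. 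The valuations of $a,d,\delta,ad-c^2$ come from the finite Taylor expansion of the 32-dimensional kernel. They give $\tfrac{61043}{20148}\lambda^2$ for the controlled action and chi-square coefficients $27/8$ and $27/20$ for idle, all below 4.

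The envelope also repairs a second defect in your plan. Evaluating $Q$ only at $x_{\max}$ is not a supremum over the pulse box. In the reverse direction the leading blocks have $a<d$, and then $Q(0)>Q(x_{\max})$. Likewise, comparison with the finite-$\lambda$ values $1.30$ and $3.03$, or with the idle KL coefficient, cannot certify an asymptotic constant.

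There are two minor slips elsewhere. The $m_X=+1$ product cells differ by $\mp d/2=\Theta(\lambda)$ and contribute $\Theta(\lambda^2)$, not $O(\lambda^3)$; this is harmless. The limiting information requirement should be computed from the limiting rate coefficients (the paper gets more than $1.5$ nats) rather than borrowed from the finite value $1.48$, although any $I_0>8\zeta\ell_0$ would suffice.
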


\label{app:weak-noise-proof}
The proof gives explicit feasible constants, with uniform remainders on
the pulse box. No numerical endpoint of the weak-noise interval is claimed.

\subsection{Exact path symmetry and the full instrument}
For every appended-fault branch, $\theta\mapsto-\theta$ reverses $\phi_1$,
exchanging the conjugated and unconjugated paths. The real cross Gram
kernel is invariant. After mixing the fault branches,
\begin{equation}
 a_{+,j}=d_{-,j},\qquad d_{+,j}=a_{-,j},\qquad c_{+,j}=c_{-,j}.
 \label{eq:weak-path-symmetry}
\end{equation}
Write $a=a_{-,j}$, $c=c_{-,j}$, $d=d_{-,j}$ and
$\delta=a_{+,j}-a_{-,j}=d-a$. The block difference is
$\operatorname{diag}(\delta,-\delta)$, so Eq.~\eqref{eq:sld-closed-form} becomes
\begin{equation}
 Q_j(x)=\frac{\delta^2+x(1-x)\delta^4/(ad-c^2)}{(1-x)a+xd}.
 \label{eq:weak-sld-simplified}
\end{equation}
Since $0\le x<1/400$,
\begin{equation}
 \begin{split}
 Q_j(x)&\le\frac{\delta^2}{\min(a,d)}
 +\frac{\delta^4}{400(ad-c^2)\min(a,d)}\\
 &\equiv F_{1,j}+F_{2,j}.
 \end{split}
 \label{eq:weak-sld-envelope}
\end{equation}
This envelope is independent of both pulse angles. For positive $\lambda$,
the two path functions are linearly independent under the full-support
Gaussian law: multiplying an error mask by $Z(54)$ changes whether it
contains qubit 1 while preserving its recovery class. Their Gram determinant
is positive. At $x=0$, common-support restriction gives the continuous SLD
limit proved in Sec.~\ref{app:task-cost-proof}.

The following finite algebra determines the orders.
Let $\mathcal O_X$ be the 16 masks generated by $27,432,192,6$, and
$\mathcal B=\mathcal O_X\cup(\mathcal O_X\mathbin{\oplus}73)$.
For $w\in\mathcal B$, let $b(w)$ identify its orbit and $e(w)_i=(-1)^{w_i}$.
Let $e_t(w)$ reverse component 1 when path index $t=1$.
For minimum-weight recovery mask $r_s$, define
\[
 \chi_{s,z,e}(w)=
 (-1)^{\operatorname{popcount}[(r_s\mathbin{\oplus}2e)\mathbin{\&}w]+zb(w)}.
\]
The raw $(t,t')$ Gram entry in fault branch $e$ is
\begin{equation}
 \frac1{1024}\sum_{w,w'\in\mathcal B}
 \chi_{s,z,e}(w)\chi_{s,z,e}(w')
 \exp[-\lambda Q_\theta^{tt'}(w,w')/8],
 \label{eq:weak-finite-kernel}
\end{equation}
where $\Delta e=e_t(w)-e_{t'}(w')$ and
$Q_\theta^{tt'}=\|\Delta e\|^2+(v_\theta^{\mathsf T}\Delta e)^2$ is an integer.
The coefficient at order $k$ replaces the exponential by
$(-Q_\theta^{tt'}/8)^k/k!$. Mix fault branches with weights
$1-\lambda/10$ and $\lambda/10$.

Let $\operatorname{ord}_\lambda f$ denote the first nonzero Taylor order.
For all 32 blocks, $a,d$ have the same order $r$; write
$u=\operatorname{ord}_\lambda\delta$ and
$h_G=\operatorname{ord}_\lambda(ad-c^2)$.
Their leading diagonal and determinant coefficients are positive.
Expansion of Eq.~\eqref{eq:weak-finite-kernel} through degree six gives
Table~\ref{tab:weak-valuations}, with the same orders in the reverse direction.
Since $u$ is nonzero and known, the orders of $\delta^2,\delta^4$ follow
algebraically; a truncated zero polynomial is never treated as identically zero.

Only $(s,z)=(1,1),(10,0),(11,0)$ contribute at order two.
Syndrome integers use the ordered $X$ masks of Sec.~\ref{app:qec-kernel}.
Their forward-direction leading coefficients are
\[
\begin{array}{@{}c|ccc@{}}
(s,z)&(a/\lambda^2,d/\lambda^2)&\delta/\lambda^2&
                (ad-c^2)/\lambda^4\\ \hline
(1,1),(10,0)&(13/10,11/20)&-3/4&219/320\\
(11,0)&(53/40,23/40)&-3/4&243/320
\end{array}
\]
with limits understood. Here $(a,d)=(a_{-,j},d_{-,j})$ and
$\delta=a_{+,j}-a_{-,j}$, so the idle-path difference is $-\delta$.
In either direction,
\begin{equation}
 \sum_j(F_{1,j}+F_{2,j})
 =\frac{61043}{20148}\lambda^2+O(\lambda^3).
 \label{eq:weak-controlled-coefficient}
\end{equation}
For idle queries, the categorical chi-square coefficients from the same raw
kernel are $27/8$ and $27/20$. All three upper coefficients are below four.
The finite analytic sums and positive leading denominators prove
Eq.~\eqref{eq:encoded-weak-cap}. Uniformity follows from the angle-independent
envelope, not an interchange of a pointwise limit and angle optimization.

\begin{table}[t]
\caption{Recovery-block valuations. $N$ counts blocks of each type;
both SLD-envelope terms have order at least two.}
\label{tab:weak-valuations}
\begin{ruledtabular}
\begin{tabular}{rrrrrr}
$r$&$u$&$h_G$&$\operatorname{ord}F_1$&$\operatorname{ord}F_2$&$N$\\
0&2&2&4&6&1\\
1&2&2&3&5&1\\
1&2&4&3&3&3\\
1&3&4&5&7&3\\
2&2&4&2&2&3\\
2&3&4&4&6&3\\
2&3&6&4&4&9\\
3&3&6&3&3&9
\end{tabular}
\end{ruledtabular}
\end{table}

\subsection{Product information and decision accuracy}
Direct summation of the nominal two-check law gives equal forward and reverse KL:
\begin{equation}
 K_{\rm prod}=\frac d2\left[
 \log\frac{1+c+d}{1+c-d}+\log\frac{1-c+d}{1-c-d}\right].
 \label{eq:weak-product-exact-kl}
\end{equation}
Because $c=1-2\lambda+O(\lambda^2)$ and $d=\lambda+O(\lambda^2)$,
the first logarithm contributes at order $\lambda^2$ while the second tends
to $\log3$, proving Eq.~\eqref{eq:product-weak-kl}.

For one selector increment $Y=\boldsymbol1\{m_X=-1\}m_Z$, put $q=(1-c)/2$.
Its two nonzero probabilities in model $+$ are $q/2+d/4$ and $q/2-d/4$.
Optimizing the exponential Markov bound gives
\begin{equation}
 \Pr_+\{T<0\},\ \Pr_-\{T\ge0\}\le\eta^n,\qquad
 \eta=1-q+\sqrt{q^2-d^2/4}.
 \label{eq:weak-chernoff}
\end{equation}
The positive-label event is enlarged to include zero for this bound;
the negative-label event includes the actual tie.
Since $\eta=1-(1-\sqrt3/2)\lambda+O(\lambda^2)$, the choice
$n_\lambda=\lceil a_0/\lambda\rceil$ gives a limiting wrong-action bound
$\exp[-a_0(1-\sqrt3/2)]$. For the choice $a_0=30$, this is below $0.018$.

\subsection{Accounting for the growing service}
The recovery-resolved Wick sums give
\begin{align}
 q_{+,0}&=\tfrac{45}{16}\lambda^2+O(\lambda^3),&
 q_{-,0}&=\tfrac{33}{16}\lambda^2+O(\lambda^3),\nonumber\\
 q_{+,1}&=\left(\tfrac{173}{80}+\tfrac34x\right)\lambda^2+O(\lambda^3),\nonumber\\
 q_{-,1}&=\left(\tfrac{233}{80}-\tfrac34x\right)\lambda^2+O(\lambda^3).
 \label{eq:weak-service-coefficients}
\end{align}
The appended correctable fault has $q_\theta^Z=\lambda+O(\lambda^2)$;
$p_z=\lambda/10$ supplies the additional $1/10$ in the controlled coefficients.
All remainders are uniform on the pulse box.

For the task family in Eq.~\eqref{eq:weak-task-family}, if
$q=C\lambda^2+O(\lambda^3)$,
the $L_\lambda$-cycle risk tends uniformly to $(1-e^{-2\ell_0C})/2$.
At $\ell_0=23/5000$ and $D=0.01$,
lower rate coefficients over $0\le x\le1/400$ relax the task constraints
as in Sec.~\ref{app:task-cost-proof}. The resulting limiting forward
information requirement exceeds $1.5$ nats, and hence exceeds one for
sufficiently small $\lambda$. Write $I_0$ for a strictly smaller positive
bound on this limiting requirement; we use $I_0=1$ nat.
Theorems~\ref{thm:task-calibration-cost}
and~\ref{thm:information-order}'s per-query bound force
$C_{\rm lb}(L_\lambda,D)\ge I_0/(K_{\rm enc}\lambda^2)$.

For the product construction, wrong-action bound $0.018$ and upper service
coefficients give limiting risks below $9.92\times10^{-3}$ and
$9.47\times10^{-3}$.
Thus the target holds for sufficiently small $\lambda$, and positive
dephasing contractions protect every prefix. Outward rational evaluation
of the exponential and square-root constants verifies these strict inequalities.

Finally,
\begin{align}
 \limsup_{\lambda\downarrow0}\lambda^2 B_{\rm prod}/\tau_c
 &\le(1+2\zeta)\ell_0,\nonumber\\
 \liminf_{\lambda\downarrow0}\lambda^2 B_{\rm enc}/\tau_c
 &\ge\ell_0+I_0/K_{\rm enc}.
 \label{eq:weak-full-cost-limits}
\end{align}
The product acquisition allowance
$r_{\rm product}\lceil a_0/\lambda\rceil\tau_c$ is $O(\lambda^{-1})$.
Thus, once the limiting error bounds are strictly below the target, the
sufficient ordering of budget coefficients is
\begin{equation}
 (1+2\zeta)\ell_0<b_0<\ell_0+I_0/K_{\rm enc}.
 \label{eq:weak-budget-coefficients}
\end{equation}
For the stated construction, the two endpoints are \(0.004784\) and
\(0.2546\), with \(b_0=0.01\) strictly between them.
This proves the theorem including floors and ceilings.
These are sufficient upper and necessary lower costs, not optimized costs;
no divergence of the actual optimal total-cost ratio is asserted.

\section{Approximate matching of leading syndrome rates}
\label{app:approximate-matching}
The exact matching condition in Theorem~\ref{thm:syndrome-matching}
has a uniform extension. Keep its code, correction, independent Gaussian
cycles, finite classical action menu, and syndrome-only observation
interface. Let all covariance shapes lie in a fixed compact set
$\mathcal K$ with
\begin{equation}
 \mu I\preceq\Sigma\preceq MI\quad(\Sigma\in\mathcal K),
 \qquad 0<\mu\le M<\infty.
 \label{eq:am-compact}
\end{equation}
The shapes may now depend on $\lambda$ within this set. Let $\mathcal S_1$
be the nonzero single-$Z$ syndromes, and define the largest mismatch of
their leading rates by
\begin{equation}
 \delta=\max_{a,s\in\mathcal S_1}
       |\omega_{+,a,s}-\omega_{-,a,s}|,
 \label{eq:am-delta}
\end{equation}
with $\delta=0$ if $\mathcal S_1$ is empty. This is a mismatch of rate
coefficients, not of syndrome probabilities: the latter differ by
$O(\lambda\delta)+O(\lambda^2)$.

\begin{proposition}[Approximate matching]
\label{prop:approximate-matching}\hfill\\
There exist $C_1,C_2,\lambda_0>0$, depending only on the fixed code,
action menu, and covariance family, such that every adaptive
syndrome-only record with a hard cap of $N$ cycles obeys
\begin{equation}
 D_{\rm KL}(P_+^{Y_N}\Vert P_-^{Y_N})
 \le N(C_1\lambda\delta^2+C_2\lambda^2),
 \label{eq:am-record}
\end{equation}
and the reverse bound, for $0<\lambda\le\lambda_0$.
If both identification errors are at most $\epsilon_{\rm id}\in(0,1/2)$, then
\begin{equation}
 N\ge\frac{{\rm kl}(1-\epsilon_{\rm id}\Vert\epsilon_{\rm id})}
                  {C_1\lambda\delta^2+C_2\lambda^2}.
 \label{eq:am-count}
\end{equation}
\end{proposition}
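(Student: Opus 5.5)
The proof follows the exact-matching argument of Theorem~\ref{thm:syndrome-matching} step by step. The one change is that the first-order contrast in the single-error sectors is kept and bounded, not assumed to vanish. Throughout, $p_{\theta,a,s}(\rho)=\operatorname{Tr}[\rho B_{\theta,a,s}(\lambda)]$ for a logical state $\rho$ and an admitted action $a$. The first task is to redo the effect expansion \eqref{eq:generic-effect-expansion} uniformly over the compact family $\mathcal K$. Entries of $B$ are finite sums of Gaussian characteristic functions $e^{-\lambda\, \mathbf e^{\mathsf T}\Sigma\mathbf e/8}$, with integer sign vectors $\mathbf e$ and $\|\Sigma\|\le M$. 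Their Taylor remainders are therefore bounded uniformly in $\Sigma\in\mathcal K$, even if $\Sigma$ depends on $\lambda$. The Gram lower bound \eqref{eq:generic-gram-lower} must also be uniform over $\mathcal K$. For this I use $\Sigma\succeq\mu I$: the Gram matrix of the squarefree monomials is continuous in $\Sigma$ and positive definite on the compact set $\mathcal K$, so its minimum eigenvalue $g_s$ has a positive infimum. Together these give a common $\lambda_0$ and constants $c_s,C_s>0$ with $c_s\lambda^{w_s}\le p_{\theta,a,s}(\rho)\le C_s\lambda^{w_s}$, uniformly in $\rho$, $a$, $\theta$ and $\Sigma\in\mathcal K$.

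The second step bounds the probability differences between the two models, for an arbitrary pair of logical states $\rho_+,\rho_-$. For a single-error syndrome $s\in\mathcal S_1$, correctability gives $B^{(1)}_{\theta,a,s}=\omega_{\theta,a,s}I$. Hence
\[
p_{+,a,s}(\rho_+)-p_{-,a,s}(\rho_-)=\lambda(\omega_{+,a,s}-\omega_{-,a,s})+O(\lambda^2),
\]
so its magnitude is at most $\lambda\delta+C'\lambda^2$. The $O(\lambda^2)$ term is where the dependence on the state pair enters. For sectors with $w_s\ge2$, the difference is bounded by the common upper bound $O(\lambda^{w_s})$. By normalization, the zero-syndrome difference is $O(\lambda\delta)+O(\lambda^2)$.

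The third step is the $\chi^2$ estimate, using $D_{\rm KL}\le\chi^2$. For each $s\in\mathcal S_1$ the contribution is
\[
\frac{(\lambda\delta+C'\lambda^2)^2}{c_s\lambda}\le \frac{2}{c_s}\left(\lambda\delta^2+C'^2\lambda^3\right).
\]
Sectors with $w_s\ge2$ contribute $O(\lambda^{w_s})\le O(\lambda^2)$. The zero syndrome has denominator bounded below by $1/2$, so it contributes $O(\lambda^2\delta^2+\lambda^4)$. Because $\delta\le M'$ is bounded on $\mathcal K$, that term is absorbed into $C_2\lambda^2$. Summing over the finitely many syndromes and actions gives a per-cycle bound $C_1\lambda\delta^2+C_2\lambda^2$, uniform over state pairs and in both KL directions.

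The fourth step handles adaptivity exactly as in Theorem~\ref{thm:syndrome-matching}. At a fixed classical history, the model-conditioned logical/reference states may differ between the two labels, but the per-cycle bound holds for an arbitrary state pair. The action kernels and private coins are label independent given the history. Padding early stops with null symbols up to the hard cap $N$, the chain rule yields \eqref{eq:am-record}. Binary data processing then gives $D_{\rm KL}\ge{\rm kl}(1-\epsilon_{\rm id}\Vert\epsilon_{\rm id})$, and dividing by the per-cycle bound gives \eqref{eq:am-count}.

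I expect the main obstacle to be uniformity. Once the shapes are allowed to depend on $\lambda$, the Taylor constants and Gram eigenvalues must be controlled over all of $\mathcal K$, not at one fixed shape. I would handle this with continuity on the compact set plus the explicit exponential remainder bound $|e^{-x}-\sum_{j<k}(-x)^j/j!|\le x^k/k!$. A second point needs care: the single-error denominator must be bounded below by $c_s\lambda$ independently of $\delta$, which the uniform Gram bound supplies.
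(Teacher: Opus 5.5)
Your proposal is correct and matches the paper's proof step for step: uniform two-sided sector bounds $c_s\lambda^{w_s}\le p_{\theta,a,s}(\rho)\le C_s\lambda^{w_s}$ from compactness of $\mathcal K$ (Gram eigenvalues and analytic effect remainders), the single-error difference $\lambda\delta+O(\lambda^2)$ for arbitrary state pairs, the sector-by-sector $\chi^2$ estimate with normalization for the zero syndrome, then the padded adaptive chain rule and binary data processing. The only cosmetic difference is that you absorb the $O(\lambda^2\delta^2)$ zero-syndrome term using boundedness of $\delta$ on $\mathcal K$, whereas the paper takes $\lambda_0\le1$; either works.
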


\noindent\emph{Proof.}
The monomial Gram matrices in Eq.~\eqref{eq:generic-gram-lower} are
continuous and strictly positive on $\mathcal K$. Compactness makes their
lower eigenvalue bounds and the analytic effect-operator remainders uniform.
Consequently, for every nonzero reachable syndrome and every logical state,
\begin{equation}
 c_s\lambda^{w_s}\le p_{\theta,a,s}(\rho)
                  \le C_s\lambda^{w_s}.
 \label{eq:am-probability-bounds}
\end{equation}
For $s\in\mathcal S_1$, coherent correctability also gives
\begin{equation}
 p_{\theta,a,s}(\rho)=\lambda\omega_{\theta,a,s}
                  +r_{\theta,a,s}(\rho),\qquad
 |r_{\theta,a,s}(\rho)|\le R\lambda^2.
 \label{eq:am-leading-remainder}
\end{equation}
References do not change these bounds; omit unreachable outcomes.

Fix an action and any two, possibly different, conditional logical
states $\rho_+,\rho_-$. Write their syndrome probabilities as $p_s,q_s$.
For $s\in\mathcal S_1$,
$|p_s-q_s|\le\lambda\delta+2R\lambda^2$. Thus
\begin{equation}
 \sum_{s\in\mathcal S_1}\frac{(p_s-q_s)^2}{q_s}
 \le C\lambda\delta^2+C'\lambda^3.
 \label{eq:am-single-sector}
\end{equation}
Here and below the constants are uniform; the inequality follows from
$(x+y)^2\le2x^2+2y^2$ and $q_s\ge c_s\lambda$.
For sectors with $w_s\ge2$, Eq.~\eqref{eq:am-probability-bounds}
bounds the numerator by a constant times $\lambda^{2w_s}$ and the
denominator below by $c_s\lambda^{w_s}$. Their total contribution is
$O(\lambda^2)$. Normalization bounds the zero-syndrome difference by
$O(\lambda\delta+\lambda^2)$; its denominator is at least $1/2$
for sufficiently small $\lambda$. Its contribution is consequently
$O(\lambda^2\delta^2+\lambda^4)$. Taking $\lambda_0\le1$ and using
$D_{\rm KL}(p\Vert q)\le\chi^2(p\Vert q)$ proves the single-cycle
bound in Eq.~\eqref{eq:am-record}. The same argument holds with the
models exchanged.

The bound permits different model-conditioned states, so the same adaptive
chain rule and binary data processing as in Sec.~\ref{app:syndrome-matching}
prove Eqs.~\eqref{eq:am-record} and \eqref{eq:am-count}. \hfill$\square$

\subsection{The first-order term and the crossover}
For a fixed action the single-cycle divergence, even for a pair of
different logical states, has the uniform expansion
\begin{align}
 k_a(\lambda)&=\lambda\Psi_a+O(\lambda^2),\nonumber\\
 \Psi_a&=\sum_{s\in\mathcal S_1}
 \left[\omega_{+,a,s}\log\frac{\omega_{+,a,s}}{\omega_{-,a,s}}
       -\omega_{+,a,s}+\omega_{-,a,s}\right].
 \label{eq:am-leading-kl}
\end{align}
The single-error sectors supply the log ratios; normalization supplies
$-\omega_++\omega_-$ from the zero syndrome. Higher-weight sectors have
uniformly bounded positive ratios and contribute $O(\lambda^2)$.
If $m_\omega\le\omega_{\theta,a,s}\le M_\omega$, Taylor's integral
remainder for $x\log x$ yields
\begin{equation}
 \frac{\sum_s(\omega_{+,a,s}-\omega_{-,a,s})^2}{2M_\omega}
 \le\Psi_a\le
 \frac{\sum_s(\omega_{+,a,s}-\omega_{-,a,s})^2}{2m_\omega}.
 \label{eq:am-rate-quadratic}
\end{equation}
Thus $\delta=O(\sqrt\lambda)$ preserves the quadratic upper bound
and the inverse-square identification lower bound. Conversely, if the
covariance shapes are fixed and any leading rate differs for a chosen
action, that action has a strictly positive first-order KL coefficient.
For independent fresh-input repetitions its squared Hellinger distance,
using $H^2(p,q)=\sum_s(\sqrt{p_s}-\sqrt{q_s})^2$, is
\begin{equation}
 H^2=\lambda\sum_{s\in\mathcal S_1}
       (\sqrt{\omega_{+,a,s}}-\sqrt{\omega_{-,a,s}})^2
       +O(\lambda^2).
 \label{eq:am-hellinger}
\end{equation}
Product affinity and binary data processing give matching
$\Theta(\lambda^{-1})$ fixed-error costs for fresh-input independent trials.
The crossover $\delta^2\sim\lambda$ is asymptotic and syndrome-only, not
a finite tolerance or a general full-output converse.

\subsection{Patch perturbation at unchanged leading control benefit}
\label{app:patch-mismatch}
Return to the nine-qubit patch and its fixed recovery in
Sec.~\ref{app:qec-kernel}, but use ideal idle and ideal $X_1$
conjugation, with no appended fault. Let
\begin{equation}
 \Sigma_\theta(t)=I+v_\theta v_\theta^{\mathsf T}
   +\theta t(e_0e_3^{\mathsf T}+e_3e_0^{\mathsf T}),
 \qquad |t|\le\tfrac14,
 \label{eq:perturbed-covariance}
\end{equation}
where $e_j$ is the $j$th coordinate vector and the phase covariance
is $\lambda\Sigma_\theta(t)$. The perturbation has operator norm
$|t|$, so $\Sigma_\theta(t)\succeq(1-|t|)I\succeq3I/4$.
All individual phase variances remain unchanged.
Only the single-error syndrome containing $Z_0$ and $Z_3$ changes its
leading rate:
\begin{equation}
 \omega_{\theta,1}(t)=\frac{1+\theta t}{2},\qquad\delta=|t|.
 \label{eq:patch-leading-mismatch}
\end{equation}
The same mismatch holds for the ideal conjugation action because it
reverses only coordinate 1.

\subsubsection{Recovered logical error}
Write $q_{\theta,a}=Q_{\theta,a}(t)\lambda^2+O(\lambda^3)$.
Here syndrome integers encode the four $X$-check outcomes in the order
$X(27),X(432),X(192),X(6)$, with the first check in bit 0 and a bit equal
to one for outcome $-1$. The weight-two errors recovered to a logical $Z$ are
\begin{equation}
\begin{array}{c|l}
s&\mathcal B_s\ \text{(qubit pairs)}\\\hline
8&(0,1),(1,3)\\
9&(0,2),(2,3)\\
2&(0,4),(3,4),(6,7)\\
3&(0,5),(0,8),(3,5),(3,8)\\
1&(1,2),(4,5),(4,8)\\
6&(5,6),(6,8)\\
4&(5,7),(7,8)
\end{array}
\label{eq:patch-malignant-pairs}
\end{equation}
All other syndrome sets are empty. Substitute
$S=\Sigma_\theta(t)$ or $D_1\Sigma_\theta(t)D_1$, with $D_1$ reversing
coordinate 1, into the Wick sum in Eq.~\eqref{eq:wick-risk}. This gives
\begin{align}
 Q_{+,0}(t)&=45/16+9t/8,&Q_{+,1}(t)&=33/16+9t/8,\nonumber\\
 Q_{-,0}(t)&=33/16-9t/8,&Q_{-,1}(t)&=45/16-9t/8.
 \label{eq:patch-perturbed-q}
\end{align}
In particular,
\begin{align}
 q_{+,0}-q_{+,1}&=\tfrac34\lambda^2+O(\lambda^3),\nonumber\\
 q_{-,1}-q_{-,0}&=\tfrac34\lambda^2+O(\lambda^3).
 \label{eq:patch-unchanged-control-gap}
\end{align}
Analyticity on the compact interval makes the remainders uniform.
For sufficiently small $\lambda$, the preferred actions remain opposite
throughout this perturbation family.

\subsubsection{Syndrome information}
The perturbation leaves the phase marginal on qubits 1 and 2 unchanged;
spectator phases act only as global phases on the other seven $|0\rangle$ states.
The product law and its information are therefore exactly unchanged.
For encoded syndrome-only idle records, Eq.~\eqref{eq:am-leading-kl}
instead gives
\begin{align}
 k_{\rm syn}(\lambda,t)&=\lambda\Psi(t)+O(\lambda^2),\nonumber\\
 \Psi(t)&=\frac{1+t}{2}\log\frac{1+t}{1-t}-t
         =t^2+O(t^3).
 \label{eq:patch-crossover-kl}
\end{align}
At $t=0$, the only differing minimum-weight-two syndrome rates are
$5/4$ versus $1/2$ in each of syndromes 10 and 11. Their generalized
KL contributions give
\begin{equation}
 k_{\rm syn}(\lambda,0)=A_0\lambda^2+O(\lambda^3),\qquad
 A_0=\tfrac52\log(\tfrac52)-\tfrac32.
 \label{eq:patch-matched-coefficient}
\end{equation}
Normalization supplies the compensating linear terms, leaving twice
$a\log(a/b)-a+b$ with $(a,b)=(5/4,1/2)$.
Each nonzero syndrome probability factors as $\lambda^{w_s}$ times
a positive analytic function. The common factors cancel in its
logarithmic ratio, so no $\lambda^2\log\lambda$ term occurs. The
second-order coefficient is smooth in $t$. Hence, for fixed real $c_t$,
\begin{equation}
 k_{\rm syn}(\lambda,c_t\sqrt\lambda)
 =(A_0+c_t^2)\lambda^2+O(\lambda^{5/2}).
 \label{eq:patch-scaled-mismatch}
\end{equation}
The control gap stays fixed while the learning cost crosses over near
$|t|=\sqrt{A_0\lambda}\simeq0.028$ at $\lambda=10^{-3}$.
The two affected syndrome rates then differ by about $5.6\%$ of their mean.
Here $t$ is a covariance-shape entry; $|t|\le1/4$ guarantees positive
covariance, not a finite-budget separation. Such a perturbed certificate
would require full-output and logical-risk bounds over the pulse box.
The finite certificate below uses $t=0$.

\section{A finite domain of separated tasks}
\label{app:finite-task-domain}
Keep the original noise, full pulse box, selector, probe count, and
timing allowances of the finite working point fixed. Use ideal delivered
records. The only varying task parameters are integer depth $L$ and
real error tolerance $D$. Define the band
\begin{equation}
 L\in[L_-,L_+]\cap\mathbb N,\qquad d_-L\le D\le d_+L.
 \label{eq:finite-task-band}
\end{equation}
For $L_-=3000$, $L_+=4500$, $d_-=2.18\times10^{-6}$,
$d_+=2.28\times10^{-6}$, every task in this band is guaranteed by
the same product protocol but excluded for the full-record encoded
class at budget $2.5\times10^5\tau_c$ and deadline $5000\tau_c$.
The independently varying rectangle
\begin{equation}
 L\in[4400,4610]\cap\mathbb N,\qquad
 0.00998\le D\le0.0101
 \label{eq:finite-task-rectangle}
\end{equation}
has the same separation and contains the worked task in its interior.
This varies the task requirements at fixed physical noise,
distinct from the covariance perturbation in the preceding section.

For the calculation, use the selector enclosures in
Eq.~\eqref{eq:product-selector-tail}, conservative rate intervals
in Eq.~\eqref{eq:full-box-service-rates}, and information caps
$k_+=1.30\times10^{-6}$, $k_-=3.03\times10^{-6}$.
At each depth, upper rate endpoints and adverse selector endpoints
give an upper bound on the achieved error. Lower rate endpoints relax
$u$ downward and $v$ upward, as proved in
Sec.~\ref{app:task-cost-proof}. On $0<v<u<1$, both binary divergences
decrease with decreasing $u$ or increasing $v$. Increasing $D$ has
these effects. Thus test achieved error at the lower $D$ endpoint and
the encoded necessary cost at the upper endpoint; every real tolerance
between them is then covered. Positive dephasing contractions also
protect all earlier prefixes.

Integer-directed arithmetic encloses the survival recurrence
$(1-2q)^L$, with endpoints reversed when converting survival to error.
For logarithms, range reduction writes the argument as $2^kr$ with
$1\le r<2$. Setting $z=(r-1)/(r+1)$ gives
\begin{equation}
 \log r=2\sum_{j=0}^{m-1}\frac{z^{2j+1}}{2j+1}+\mathcal R_m,
 \quad 0\le\mathcal R_m\le\frac{9z^{2m+1}}{4(2m+1)}.
 \label{eq:task-domain-log-bound}
\end{equation}
This follows by bounding the positive tail geometrically with
$z\le1/3$. Using $m=80$ and integer scale $10^{70}$, every product
and division is rounded outward; negative $k$ reverses the endpoints
of $k\log2$. These are interval bounds, not a floating-point grid test.

Checking all 1501 integer depths in the band gives achieved-error
margin above $3.61\times10^{-5}$, product cost at most
$204680\tau_c$, and encoded necessary cost above $383999\tau_c$.
The largest service duration is $4680\tau_c$. For the rectangle's
211 integer depths, the maximum achieved error is below $0.009959175$,
the minimum encoded necessary cost exceeds $317164\tau_c$, and
the maximum product cost and service duration are respectively
$204794.4\tau_c$ and $4794.4\tau_c$. Both domains therefore satisfy
the common budget and deadline strictly. They overlap at $L=4500$.
These domains are certified sufficient subsets, rather than optimized
boundaries of achievable tasks.

\section{Readout false positives and rare-event information}
\label{app:ab-noise-proof}
After ideal extraction and recovery, independently flip the delivered
fields $m_X,m_Z$ with probabilities $\epsilon_X,\epsilon_Z$.
The corrupted fields are never fed into recovery or service; the other
six fields retain their nominal laws. Define
\begin{equation}
 s_X=1-2\epsilon_X,\quad\rho_Z=1-2\epsilon_Z,
 \quad c_X=s_Xc,\quad h=s_X\rho_Zd,
 \label{eq:ab-flip-parameters}
\end{equation}
where $c,d$ are given in Eq.~\eqref{eq:product-cd}.
Convolving the four pre-flip probabilities gives
\begin{equation}
 P_\theta(m_X'=a,m_Z'=b)=\frac{1+ac_X-\theta abh}{4}.
 \label{eq:ab-flip-law}
\end{equation}
The remaining three fair fields and three deterministic fields contain
no label information, so this specifies the complete record divergence:
\begin{equation}
 K=\frac h2\left[\log\frac{1+c_X+h}{1+c_X-h}
                   +\log\frac{1-c_X+h}{1-c_X-h}\right].
 \label{eq:ab-exact-kl}
\end{equation}
Both KL directions are equal because changing the label swaps the two
values of $m_Z'$ at each $m_X'$. For fixed $0\le\epsilon_Z<1/2$
and $\epsilon_X=\kappa\lambda$, with finite $\kappa\ge0$, expansion
of $c=1-2\lambda+O(\lambda^2)$ and $d=\lambda+O(\lambda^2)$ gives
\begin{equation}
 K=\frac{\rho_Z}{2}
 \log\frac{2(1+\kappa)+\rho_Z}{2(1+\kappa)-\rho_Z}\,
 \lambda+O(\lambda^2).
 \label{eq:ab-scaled-floor-kl}
\end{equation}
The rare-sector mass is $[1+\kappa]\lambda+O(\lambda^2)$ and
retains a finite conditional contrast. At fixed
$0<\epsilon_X<1/2$, both $m_X'$ sectors instead have positive
limiting probability; their label differences are only $O(\lambda)$.
Expanding both logarithms in Eq.~\eqref{eq:ab-exact-kl} yields
\begin{equation}
 K=\frac{s_X^2\rho_Z^2}{2\epsilon_X(1-\epsilon_X)}
       \lambda^2+O(\lambda^3).
 \label{eq:ab-fixed-floor-kl}
\end{equation}
The squared Hellinger distance has the same respective positive orders,
as follows directly from Eq.~\eqref{eq:ab-flip-law}; thus the fixed-error
independent-probe costs have the corresponding inverse orders.
At either fully randomizing rate $1/2$, label dependence disappears.
The contrast-field error $\epsilon_Z<1/2$ reduces the signal, whereas
a fixed false-positive background in the rare-sector field $m_X$ changes its
asymptotic order. These delivered-field flips are label-independent
postprocessing of encoded records too, so they cannot increase the
encoded information cap. This is a record-noise model after ideal
correction, not a general noisy-extraction or state-preparation-and-measurement
(SPAM) threshold.

The exact finite working point illustrates why per-query information alone is
insufficient. At $\lambda=10^{-3}$, $\epsilon_X=0.01$, and $\epsilon_Z=0$,
Eq.~\eqref{eq:ab-exact-kl} gives equal directed divergences
$K=4.4076783\ldots\times10^{-5}$ nats per probe. Any plan that reserves the
minimum $4600\tau_c$ service cost under $B=250000\tau_c$ and charges
$r_{\rm product}=20$ fits at most 12270 probes, giving $nK<0.541$; the
displayed worst-service allowance $4784\tau_c$ leaves 12260. This is below the
necessary forward task information $1.48$ in
Eq.~\eqref{eq:b2-numeric-kl-bounds}. Therefore the ideal-record Fig.~2 task
guarantee does not extend to this charged noisy-record point.

A constructive noisy-record point follows without simulation. For
$Z=\boldsymbol1\{m_X'=-1\}m_Z'$, put
$q=(1-c_X)/2$ and $\mu=h/2$. An exponential Markov bound, with the
positive-model error enlarged to include a tie, gives both sign-test errors as
\begin{equation}
 \begin{aligned}
  \Pr_+\{T<0\},\ \Pr_-\{T\ge0\}&\le M^n,\\
  M&=1-q+\sqrt{q^2-\mu^2}.
 \end{aligned}
 \label{eq:ab-noisy-selector-bound}
\end{equation}
At the same $\lambda,\epsilon_X,\epsilon_Z$, odd/even Taylor bounds for the
exponentials and an exact rational squared-inequality check give
$M<0.999989104$. The bound
$\log(1-\delta)\le-\delta-\delta^2/2$, with
$\delta=1-0.999989104$, and an even Taylor upper bound on the remaining
exponential give $M^{300000}<0.038052$. Combining this error bound with the upper service-rate
endpoints in Eq.~\eqref{eq:full-box-service-rates} and
Eq.~\eqref{eq:robust-selector-risks} gives
$R_{{\rm sel},+}<0.009946$ and $R_{{\rm sel},-}<0.009522$, protecting every
prefix through $L=4600$. If $r_{\rm product}\le3$, the total cost is at most
$(300000\times3+4784)\tau_c=904784\tau_c$. Thus $B=10^6\tau_c$ admits this
product protocol while the unchanged encoded lower bound
$B_{\rm encoded}>1.14\times10^6\tau_c$ excludes encoded calibration. This is
a sufficient point, not a minimum sample count or a noisy-extraction/SPAM
certificate.

\end{document}

%% file: macros.tex
\newtheorem{theorem}{Theorem}
\newtheorem{lemma}{Lemma}

\newtheorem{proposition}{Proposition}